\newtheorem{theorem}{Theorem}[section]
\newtheorem{lemma}[theorem]{Lemma}
\newtheorem{proposition}[theorem]{Proposition}
\newtheorem{definition}[theorem]{Definition}
\newtheorem{problem}[theorem]{Problem}
\title{Metric Based Quadrilateral Mesh Generation}
\author{Wei Chen\thanks{Dalian University of Technology, Dalian, China. Email: wei.chen@mail.dlut.edu.cn},
Xiaopeng Zheng\thanks{Dalian University of Technology, Dalian, China. Email: zhengxp@dlut.edu.cn},
Jingyao Ke\thanks{University of Science and Technology of China, Hefei, China. Email: keyushu@mail.ustc.edu.cn},
Na Lei\thanks{Dalian University of Technology, Dalian, China. Email: nalei@dlut.edu.cn (corresponding author)},\\
Zhongxuan Luo\thanks{Dalian University of Technology, Dalian, China. Email: zxluo@dlut.edu.cn},
Xianfeng Gu\thanks{Stony Brook University, New York, US. Email: gu@cs.stonybrook.edu}}
\date{ }
\begin{document}

\maketitle
\begin{abstract}
This work proposes a novel metric based algorithm for quadrilateral mesh generating. Each quad-mesh induces a Riemannian metric satisfying special conditions: the metric is a flat metric with cone signualrites conformal to the original metric, the total curvature satisfies the Gauss-Bonnet condition, the holonomy group is a subgroup of the rotation group $\{e^{ik\pi/2}\}$, furthermore there is cross field obtained by parallel translation which is aligned with the boundaries, and its streamlines are finite geodesics. Inversely, such kind of metric induces a quad-mesh. Based on discrete Ricci flow and conformal structure deformation, one can obtain a metric satisfying all the conditions and obtain the desired quad-mesh.

This method is rigorous, simple and automatic. Our experimental results demonstrate the efficiency and efficacy of the algorithm.
\end{abstract}

\section{Introduction}
\begin{figure}[h!]
\centering
\begin{tabular}{cc}
\includegraphics[width=0.5\textwidth]{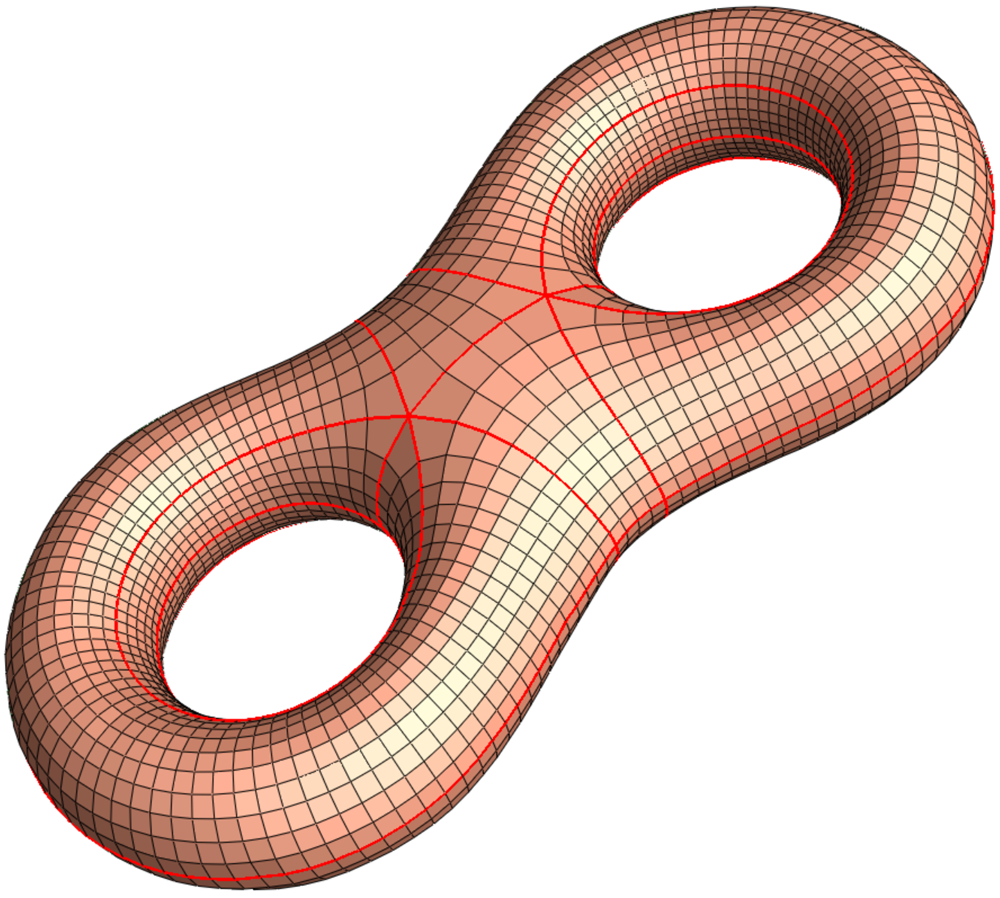}&
\includegraphics[width=0.5\textwidth]{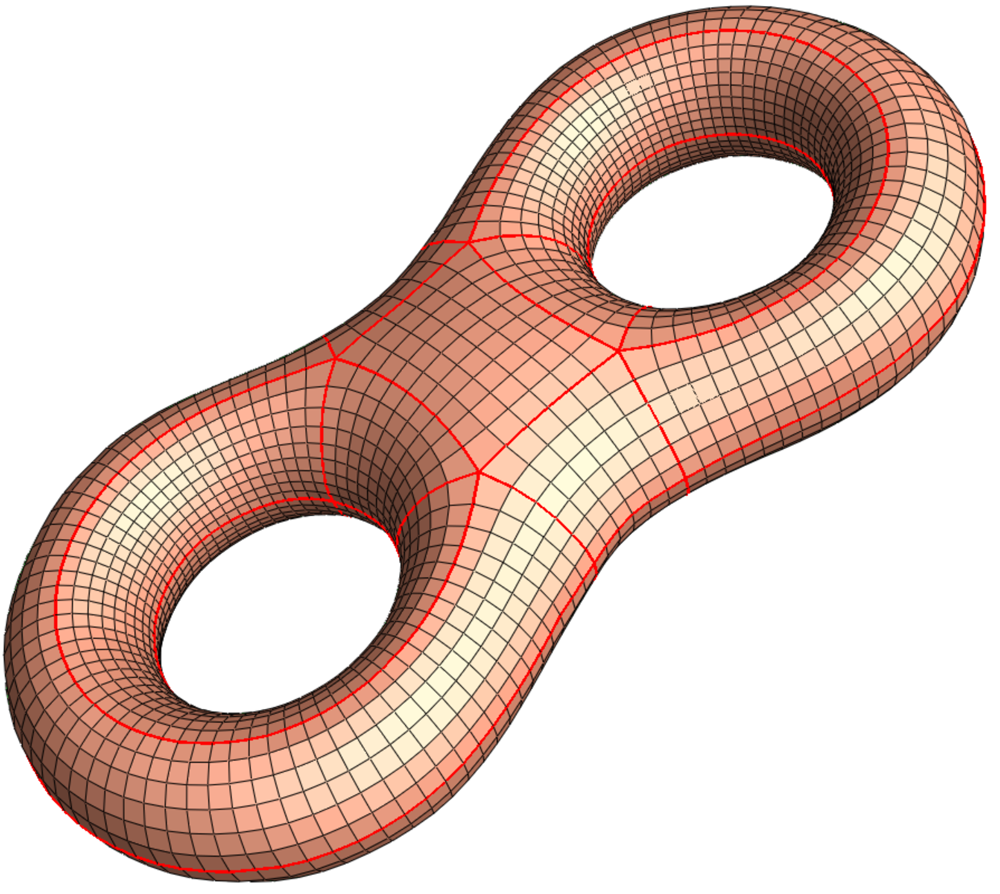}\\
\end{tabular}
\caption{A quad-mesh with 4 and 8 singularities on a genus 2 surface. The red curves are the separatrices, which form the skeletons.}
\label{fig:eight1}
\end{figure}

\paragraph{Quadrilateral Meshes}

With the development of 3D acquisition technologies, triangle meshes become ubiquitous in many engineering fields for its simplicity and flexibility. However, quadrilateral meshes have been widely used in CAD and simulation because they have many merits: 1) quad-mesh has tensor product structure, suitable for Spline fitting purpose. Hence quad-mesh is applied for high-order surface modeling, such as CAD/CAM for Splines and NURBS, and movie industry for subdivision surfaces; 2) quad-mesh better captures the local geometric characteristics, such as principle directions or sharp features, as well as the semantics of the objects, hence it is widely used in animation industry; 3) patches of the skeleton of quad-meshes with a rectangular grid topology match the sampling pattern of textures. Therefore quad-mesh is preferred for texture mapping and compression.


Fig.~\ref{fig:eight1} shows two quad-meshes on a genus two surface. A vertex is called \emph{regular}, if its topological valence is $4$; otherwise, it is \emph{singular}. The left quad-mesh has $4$ singularities, the right quad-mesh has $8$ singularities. The \emph{separatrices} are drawn in red, which are geodesics, roughly speaking the shortest paths on the quad-mesh, connecting singularities. The separatrices divide the surface into rectangular patches, this forms the \emph{skeleton} of the quad-mesh, which is the coarsest level of the quad-mesh. The finer levels of the quad-mesh can be obtained by subdividing the skeleton.


The regularity of the quad-mesh can be described by the number of singularities, and the global behavior of the separatrices. Roughly speaking, quad-meshes can be classified to four categories with the ascending regularity:
\begin{enumerate}
\item Unstructured quad-mesh: a large fraction of its vertices are singularities, the tensor product structure can hardly be found.
\item Valence semi-regular quad-mesh: The number of singularities are few, but the separatrices have complicated global behavior, they may have intersections, form spirals and go through most edges.
\item Semi-regular quad-mesh: The separatrices divide the quad-mesh into several topological rectangles, the interior of each topological rectangle is regular grids.
\item Regular quad-mesh: There are no singularities, all vertices are normal, such as geometry image \cite{Gu:2002:GI:566654.566589}.  A regular quad-mesh has strong topological restriction, it must a topological disk, or an annulus or a torus.
\end{enumerate}
The current work focuses on a special class of quad-meshes, the semi-regular quad-mesh.



\paragraph{Metric Based Method}


Given a topological surface $S$ with a quad-mesh structure $\mathcal{Q}$, if we treat each quadrilateral face as a unit Euclidean square, then the quad-mesh structure naturally induces a Riemannian metric $\mathbf{g}$, the so-called \emph{quad-mesh metric}. The metric $\mathbf{g}$ induces zero Gaussian curvature everywhere, except at the singularities. If the topological valence of an interior singularity is $k$, then the Gaussian curvature measure at the point is $(1-k/4)\pi$; the Gaussian curvature measure of a boundary singularitiy with valence $k$ is $(1-k/2)\pi$. Therefore, the metric $\mathbf{g}$ is a flat metric with cone singularities.

In practice, it is highly desirable that the quad-faces are uniform squares, this implies that the quad-mesh metric $\mathbf{g}$ is \emph{conformal} to the initial metric, namely these two Riemannian metrics differ by a scalar function. In this work, we study the following problem:

\begin{problem} Given a topological surface $S$, what kind of Riemmannian metric $\mathbf{g}$ is induced by a quad-mesh $\mathcal{Q}$?
\end{problem}

We prove that a metric $\mathbf{g}$ induced by a quad-mesh $\mathcal{Q}$ has many special properties:
\begin{enumerate}
    \item The metric $\mathbf{g}$ is flat except at the singularities. The total curvature measures at the singularities equals to $2\pi$ multiply the Euler characteristic number of the surface. This is the \emph{Gauss-Bonnet condition}.
    \item In each quad-face, we can assign a cross (two orthogonal line segments, parallel to edges), then we get a global smooth \emph{cross field}. This is equivalent to the so-called \emph{holonomy condition}.
     \item If the surface has boundaries, then the cross field is aligned with the boundaries, namely the boundaries are either parallel or orthogonal to the axies of the crosses. This is called \emph{boundary alignment condition}.
    \item If we connect the horizontal and vertical edges of the quad-faces, we get geodesic loops. If we subdivide the quad-mesh infinite many times, we obtain geodesic lamination, each leaf is a closed loop. This is called the \emph{finite geodesic lamination condition}.
\end{enumerate}
Inversely, if we have a metric $\mathbf{g}$ satisfies the above conditions, then the geodesics aligned with the cross field give the quad-mesh $\mathcal{Q}$.


Equivalently, the surface with the metric $(S,\mathbf{g})$ can be treated as a generalized translation surface.
\begin{definition}[Generalized Translation Surface]
Suppose $P$ is a polygon immersed in the Euclidean plane $\mathbb{R}^2$. The sides of $P$ are identified by the rigid motions of the plane, such that all the rotation angles are $k\pi/2$, $k\in \mathbb{Z}$, then the quotient space is called a generalized translation surface.
\end{definition}

This fact inspires us to develop the metric driven approach for quad-mesh generation. In order to find a high quality quad-mesh $\mathcal{Q}$, we try to find a Riemannian metric $\mathbf{g}$ satisfying the above conditions, then trace the geodesics under $\mathbf{g}$. First, we obtain a flat metric with cone singularities $\mathbf{\tilde{g}}$ using discrete Ricci flow method \cite{Gu_JDG_2018}. The Ricci flow theory guarantees the existence and the uniqueness of the solution. Then we deform the surface with the flat metric $\mathbf{\tilde{g}}$ to satisfies the above conditions. Once the metric is obtained, the geodesics can be calculated using exact geodesic tracing method on polyhedral surfaces \cite{Surazhsky:2005:FEA:1073204.1073228}. Two families of orthogonal geodesics induce the quad-mesh $\mathcal{Q}$.

\paragraph{Contributions}

To the best of our knowledge, this is the first method that generates quad-mesh by designing a Riemannian metric with special properties. The main theorems \ref{thm:main} and \ref{thm:main_inverse} give the equivalence relation between a quad-mesh and its induced Riemannian metric, this gives a novel approach to construct the quad-mesh by finding the metric. Thanks to the solid theory of discrete surface Ricci flow \cite{Gu_JDG_2018}, which guarantees the existence and uniqueness of the solution. The clean and succinct theoretic results make the algorithm pipeline simple and automatic.

The work is organized as follows: section \ref{sec:previous_works} briefly review the most related works; section \ref{sec:theory} introduces the theoretic background, and prove the main theorems; section \ref{sec:algorithm} explains the algorithm in details, and give simple examples to illustrate the key ideas; the experimental results are reported in section \ref{sec:experiments}. Finally, the work concludes in section \ref{sec:conclusion}. 
\section{Previous Works}
\label{sec:previous_works}
The literature of quad-meshing is vast, in the following we only review the most relevant works. For more complete and thorough literature review, we refer readers to \cite{survey:Bommes2013Quad}. There are several approaches for quad-mesh generation.

\paragraph{Triangle Mesh to Quad-Mesh Conversion} The simplest way is to convert a triangular mesh to a quad-mesh directly, then perform Catmull-Clark subdivision. Alternatively, two original adjacent triangles can be fused into one quadrilateral to form a quad-mesh \cite{Gurung2011SQuad,Remacle2012Blossom,Marco2010Practical,Velho20014}. This type method can only produce unstructured quad-meshes, the quad shape is determined by the input triangle mesh.

\paragraph{Patch Based Approach} 

This approach computes the skeleton first, which divides the input surface into several square patches, then subdivides the patches to obtain the quad-mesh. This method can produce semi-regular quad-meshes. The clustering method generates the skeleton by merge neighboring triangle faces into a patch, such as normal-based and center-based methods \cite{Boier2004Parameterization,Carr2006Rectangular}. Poly-cube map \cite{Xia2011Editable,Wang2008User,Lin2008Automatic,He2009A} are adopted to compute the patches. 

\paragraph{Parameterization Based Approach } 
Many quad-meshing algorithms belong to this category.
The spectral surface quadrangulation method \cite{Dong2006Spectral,Huang2008Spectral} produces the skeleton structure from the Morse-Smale complex of an eigenfunction of the Laplacian operator on the input mesh. Discrete harmonic forms \cite{Tong2006Designing}, periodic Global Parameterization \cite{Alliez2006Periodic} and
Branched Coverings method \cite{K2010QuadCover} are all based on parameterization for quad mesh generation.

\paragraph{Voronoi Based Method} The method in \cite{L2010Lp} generates quad-meshes by introducing Lp-Centroidal Voronoi Tessellation (Lp-CVT), which is a generalization of CVT that allows for aligning the axes of the Voronoi cells with a predefined background tensor field. This method can only produce non-structured quad-mesh, there is no global tensor product structure. 

\paragraph{Cross field Based Method} Cross field guided quad-mesh generation was widely studied recently and many approaches have been proposed. Each approach must first choose a way to represent a cross, for example N-RoSy representation\cite{Palacios2007Rotational}, period jump technique\cite{Li2006Representing} and complex value representation\cite{Kowalski2013A}. Then the approaches usually generate a smooth cross field by energy minimization technique. The typical measure of field smoothness is a discrete version of the Dirichlet energy\cite{JFH}. In the end, based on the obtained cross field, these approaches generate the quad meshes by using streamline tracing techniques\cite{RS:RPS:2014} or parameterization method\cite{Bommes2012Quad}.

The cross field guided quad mesh generation method can be very useful and flexible. However it’s not easy to control the position of the singularities and the structures of the quad layout directly. The work in \cite{landau} relates the Ginzberg-Landau theory with the cross field for genus zero surface case.

Comparing to all the existing methods, our method tackles the problem from a complete different angle - the Riemannian metric induced by the quad-mesh. By using discrete Ricci flow, such kind of metric can be obtained with theoretic guarantees.
\section{Theoretic Background}
\label{sec:theory}

\begin{definition}[Quadrilateral Mesh] Suppose $S$ is a topological surface, $\mathcal{Q}$ is a cell partition of $S$, if all cells of $\mathcal{Q}$ are topological quadrilaterals, then we say $(S,\mathcal{Q})$ is a quadrilateral mesh.
\end{definition}

\begin{definition}[face path] A face path in $(S,\mathcal{Q})$ is a sequence $\gamma=(\sigma_0,\sigma_1,\dots,\sigma_n)$ such that $\sigma_0,\sigma_1,\dots,\sigma_n$ are faces and two consecutive faces $\sigma_i$ and $\sigma_{i+1}$ are neighbors in $\mathcal{Q}$ for all $0\le i \le n$.
\end{definition}
The \emph{inverse path} of $\gamma$ is denoted by $\gamma^{-1}=(\sigma_n,\sigma_{n-1},\dots,\sigma_0)$. We write $\gamma\eta$ for the concatenation of $\gamma$ with some face path $\eta=(\sigma_n,\dots,\sigma_m)$. The face path $\gamma$ is \emph{closed} if $\sigma_0=\sigma_n$.

Each face path $\gamma = (\sigma_0,\dots, \sigma_n)$ in $K$ induces a piecewise
linear path $\bar{\gamma}$ in the geometric realization of $(S,\mathcal{Q})$: Join the barycenter of each face $\sigma_i$ by linear paths to the barycenters of the common edges $\sigma_i\cap \sigma_{i-1}$ and $\sigma_i\cap \sigma_{i+1}$ of the neighboring faces $\sigma_{i-1}$ and $\sigma_{i+1}$, respectively. The face path $\gamma$ is closed if and only if the induced piecewise linear path $\bar{\gamma}$ is closed. Often we identify $\gamma$ with $\bar{\gamma}$. Moreover, we write $[\gamma]$ for the homotopy class of $\gamma$ with endpoints fixed.


On a quad-mesh, the \emph{topological valence} of a vertex is the number of faces adjacent to the vertex.

\begin{definition}[Singularity] Suppose $(S,\mathcal{Q})$ is a quadrilateral mesh. If the topological valence of an interior vertex is $4$, then we call it a \emph{normal vertex}, otherwise a \emph{singularity}; if the topological valence of a boundary vertex is $2$, then we call it a \emph{normal boundary vertex}, otherwise a \emph{boundary singularity}. The index of a singularity is defined as follows:
\[
    \text{Ind}(v_i) = \left\{
    \begin{array}{lcl}
    4-\text{Val}(v_i) & v_i\not\in \partial (S,\mathcal{Q})\\
    2-\text{Val}(v_i) & v_i\in \partial (S,\mathcal{Q})\\
    \end{array}
    \right.
\]
where $\text{Ind}(v_i)$ and $\text{Val}(v_i)$ are the index and the topological valence of $v_i$.
\end{definition}

\subsection{Topological Structure}

\paragraph{Parallel Transport and Holonomy}


\begin{definition}[Quadrilateral Mesh Metric]

Given a quadrilateral mesh $(S,\mathcal{Q})$, each quadrilateral face is assigned with the Euclidean metric to be a canonical unit square. This induces a flat metric with cone singularities, denoted as $\mathbf{g}$ and called as the quadrilateral mesh metric of $(S,\mathcal{Q})$.
\end{definition}
Under the quad-mesh metric, the surface is flat, except at the singularities. Let $\Gamma$ be the set of all singularities, then $(S-\Gamma,\mathbf{g})$ is flat everywhere. Hence the parallel transportation under $\mathbf{g}$ in $(S-\Gamma,\mathbf{g})$ is equivalent to the translation in the Euclidean space.

\begin{definition}[Parallel Transportation] Given a quadrilateral mesh $(S,\mathcal{Q})$ with the quad-mesh metric $\mathbf{g}$, $\gamma=(\sigma_0,\sigma_1,\dots,\sigma_{n-1}, \sigma_n)$ is a face path, suppose $\mathbf{v}$ is a tangent vector in $\sigma_0$, at the $i$-th step, $i=1,2,\dots,n$, both $\sigma_{i-1}$ and $\sigma_i$ are isometrically embedded on the Euclidean plane sharing a common edge, then the tangent vector is translated from $\sigma_{i-1}$ to $\sigma_i$. Eventually the tangent vector reaches the $\sigma_n$. The result vector is defined as the parallel transportation of $\mathbf{v}$ along the face path $\gamma$.
\end{definition}

\begin{definition}[Holonomy] Suppose $(S,\mathcal{Q})$ is a quad-mesh with singularity set $\Gamma$ and the quad-mesh metric $\mathbf{g}$. Let $\gamma=(\sigma_0,\sigma_1,\dots,\sigma_{n-1})$ be a face loop. Suppose one choose an orthonormal frame $\{\mathbf{e}_1, \mathbf{e}_2\}$ in $\sigma_0$, where $\mathbf{e}_k$'s are parallel to the edges of $\sigma_0$, and parallel transport the frame along $\gamma$. When the transportation returns to $\sigma_0$ again, the frame becomes $\{\mathbf{\tilde{e}}_1,\mathbf{\tilde{e}}_2\}$. The rotation from the initial frame to the final frame is called the \emph{holonomy} of $\gamma$, and denoted as $\langle\gamma\rangle$.
\end{definition}

Because $(S-\Gamma,\mathbf{g})$ is flat everywhere, if $\gamma_1$ and $\gamma_2$ are homotopic to each other in $S-\Gamma$, then their holonomies are equal, $\langle \gamma_1 \rangle = \langle \gamma_2 \rangle$. The planar rotation group is denoted as $\mathcal{R}=\{e^{k\pi/2},k=0,1,2,3\}$. This induces a homomorphism from the fundamental group of $S-\Gamma$ to the rotation group, $\varphi:[\gamma]\to \langle \gamma \rangle$
\begin{equation}
 \varphi: \pi_1(S-\Gamma,\sigma_0)\to \mathcal{R},
\end{equation}
where $\sigma_0$ is a fixed face. The mapping $\varphi$ is called the \emph{holonomy homomorphism} of the quad-mesh. The image of the holonomy homomorphism
\[
    \Pi(\mathcal{Q},\sigma_0):=\varphi(\pi_1(S-\Gamma,\sigma_0))
\]
is called the \emph{holonomy group} of the quad-mesh. For any face loop $\gamma$, its holonomy $\langle\gamma\rangle$ is a rotation with angle $k\pi/2$, where $k$ is an integer. Therefore, the order of the holonomy group $\Pi(\mathcal{Q},\sigma_0)$ is at most $4$.

\if 0

\paragraph{Branched Covering}

Let $h:X\to Z $ be a continuous map with the following properties.
\paragraph{Complete Unfolding}

Given a quad-mesh $(S,\mathcal{Q})$ with a fixed facet $\sigma_0$. Let $\Sigma(\mathcal{Q})$ is the union of the disjoint faces of $\mathcal{Q}$. Define the product $\overline{\mathcal{Q}} = \Sigma(\mathcal{Q})\times \Pi(\mathcal{Q},\sigma_0)$. Each pair $(\sigma,g)$ is a copy of the geometric square $\sigma$. The natural affine isomorphism $(\sigma,g)\to \sigma$ induces the projection $\overline{\mathcal{Q}}\to \mathcal{Q}$. The squares $(\sigma,g)$ can be glued as follows: for each face $\sigma$ of $\mathcal{Q}$ choose some facet path $\gamma_\sigma$ from $\sigma_0$ to $\sigma$. Suppose that $\rho$ is a common edge in $\mathcal{Q}$ of the facet $\sigma$ and $\tau$, $\sigma\cap \tau = \rho$. Then we glue $(\sigma,g)$ and $(\tau,h)$ with respect to the affine map induced by the identity map on $\rho$ if the equation
\begin{equation}
    gh^{-1} = \langle \gamma_\sigma \gamma_\tau^{-1} \rangle.
\end{equation}
Let $\sim$ be the equivalence relation generated by this gluing strategy. The resulting  complex
\[
    \tilde{\mathcal{Q}} = \overline{\mathcal{Q}}/\sim
\]
is called the \emph{complete unfolding} of $\mathcal{Q}$. The topological surface underlying $\tilde{\mathcal{Q}}$ is denoted as $\tilde{S}$. The \emph{unfolding map} $p:\tilde{\mathcal{Q}}\to \mathcal{Q}$ is defined by $(\sigma,g)\to \sigma$.

\begin{theorem} Given a quadrilateral mesh $(S,\mathcal{Q})$, the holonomy group of its complete unfolding $(\tilde{S},\tilde{\mathcal{Q}})$ is trivial.
\end{theorem}

\begin{proof}
Assume $\tilde{\gamma}$ is an arbitrary face loop on $\tilde{\mathcal{Q}}$,   $\tilde{\gamma} = \{\tilde{\sigma}_0, \tilde{\sigma}_1, \cdots , \tilde{\sigma}_{n}\}$,  where $\tilde{\sigma}_0=\tilde{\sigma}_n$. $\tilde{\sigma}_k = (\sigma_k,g_k)$, where $g_k \in \Pi(\mathcal{Q},\sigma_0)$, each $\sigma_k$ is associated with a path $\gamma_k \subset S$.
Consider two adjacent faces $\tilde{\sigma}_k$ and $\tilde{\sigma}_{k+1}$, then by construction of the complete unfolding, we have $\langle \gamma_k \gamma_{k+1}^{-1} \rangle = g_k g_{k+1}^{-1}$. Because $p:\tilde{\mathcal{Q}} \to \mathcal{Q}$ is a covering, the quad-mesh metric $\mathbf{g}$ on $\mathcal{Q}$ is pulled back to $\tilde{\mathcal{Q}}$ by the projection map. The holonomy can be computed as
\[
\langle \tilde{\gamma} \rangle = \Pi_{k=0}^{n-1} \langle \gamma_k \gamma_{k+1}^{-1} \rangle = \Pi_{k=0}^{n-1} g_kg_{k+1}^{-1} = 1.
\]
Because $\tilde{\gamma}$ is arbitrarily chosen, therefore the holonomy group of the complete unfolding is trivial.
\end{proof}

\subsection{Analytic Structure}

\begin{definition}[Riemann Surface]
Suppose $S$ is a topological surface, $\mathcal{A}$ is an atlas of $S$ with complex local coordinates. If all the transition functions are biholomorphic, then $\mathcal{A}$ is called a conformal atlas. The surface with a conformal atlas is called a Riemann surface.
\end{definition}

\begin{definition}[Holomorphic 1-form] Suppose $S$ is a Riemann surface with conformal atlas $\{(U_i,z_i)\}$. Suppose $\omega$ is a complex differential, on the local chart $(U_i,z_i)$, it has local representation $\omega = \varphi_i(z_i) dz_i$. On another local chart $(U_j,z_j)$, $\omega=\varphi_j(z_j)dz_j$, such that
\[
    \varphi_i(z_i)  = \varphi_j( z_j(z_i) ) \frac{\partial z_j}{\partial z_i}.
\]
If all $\varphi_i$'s are holomorphic function, then $\omega$ is called a holomorphic 1-form. If $\varphi_i(p)=0$, then $p$ is called the zero of $\omega$.
\end{definition}
All the holomorphic 1-forms on a closed Riemann surface form a linear space, the complex dimension is $g$, where $g$ is the genus of the surface.
\begin{definition}[Horizontal/vertical Trajectories] Given a Riemann surface $S$ and a holomorphic 1-form $\omega$, for each point $p\in S$, suppose a tangent vector $\mathbf{v}\in T_pS$. If $\omega(\mathbf{v})$ is real (imaginary), then $\mathbf{v}$ is called a horizontal (vertical) direction. Suppose $\gamma$ is a curve on $S$, if all the tangent vectors to $\gamma$ are along horizontal (vertical) directions, then $\gamma$ is called a horizontal (vertical) trajectory.
\end{definition}

Similarly, we can define holomorphic quadratic differential, with local representation $\varphi_i(z_i)dz_i^2$; a holomorphic quartic differential has local representation $\varphi_i(z_i)dz_i^4$, where $\varphi_i$ is holomorphic. The zeros and horizontal/vertical trajectories are defined similarly. The quad-mesh has intrinsic relation with holomorphic differentials on Riemann surfaces.

\begin{lemma} Suppose $(S,\mathcal{Q})$ is a quad-mesh, under the quad-mesh metric $\mathbf{g}$, the surface is a Riemann surface.
\end{lemma}
\begin{proof}
For each face $\sigma_i$, we flatten the face isometrically to a planar unit square, and assign the local complex parameter $z_i$. We cover each edge by a rectangle to form a local chart. Each vertex is covered by a small disk. The local parameter transitions are planar translations, except at the singularities.

Suppose in a neighborhood of a singularity $v_l$, the valence of $v_l$ is $k$, the neighboring faces are $\sigma_0,\sigma_1,\dots,\sigma_{k-1}$. Then we can define local complex parameter $w_l$, the local parameter transition is given by
\[
    w_l = (e^{i\frac{k\pi}{2}} z_i)^{4/k}.
\]
Hence the local complex charts $\{z_i,w_l\}$ form a complex atlas of $(S,\mathbf{g})$, namely $(S,\mathbf{g})$ is a Riemann surface.
\end{proof}

\begin{theorem}
Given a quadrilateral mesh $(S,\mathcal{Q})$ with the quad-mesh metric $\mathbf{g}$, there is a holomorphic quartic differential $\omega$ on the Riemann surface $(S,\mathbf{g})$, such that the edges of the faces are the parallel or vertical trajectories of $\omega$.
\end{theorem}

\begin{proof} Let $\Gamma$ be the singular vertices of $\mathcal{Q}$, $p:\tilde{S}-\tilde{\Gamma}\to S-\Gamma$ is the universal covering space of $S-\Gamma$. Let $p^*\mathbf{g}$ be the pulled back flat metric induced by the projection $p$. Then we can isometrically immerse $(\tilde{S}-\tilde{\Gamma},p^*\mathbf{g})$ in the complex plane $\mathbb{C}$, denoted as $\psi: (\tilde{S}-\tilde{\Gamma},p^*\mathbf{g})\to \mathbb{C}$.

Because the holonomy group of $(S,\mathcal{Q})$ is $\mathcal{R}$, all the Deck transformations are translations composed with rotations with angle $k\pi/2$. Therefore, the complex quartic differential form $dz^4$ defined on the plane $\mathbb{C}$ can be pulled back to $(\tilde{S}-\tilde{\Gamma})$, which is invariant under the deck transformation, therefore it is globally well defined on the down stair $(S-\Gamma, \mathbf{g})$, denoted as $\omega$. The zeros of $\omega$ are the singularities $\Gamma$.
\end{proof}

If the holonomy group of the quad-mesh is a subgroup of the rotation group $\mathcal{R}$, the corresponding holomorphic differential has special form.

\begin{proposition}
Given a quadrilateral mesh $(S,\mathcal{Q})$ with the quad-mesh metric $\mathbf{g}$, there is a holomorphic differential $\omega$ on the Riemann surface $(S,\mathbf{g})$, such that the edges of the faces are the parallel or vertical trajectories of $\omega$.
\begin{itemize}
\item If the holonomy group of the quad-mesh is trivial, $\Pi(\mathcal{Q},\sigma_0)=\langle 1 \rangle$, then $\omega$ is a holomorphic 1-form;
\item if $\Pi(\mathcal{Q},\sigma_0)=\langle 1,-1 \rangle$, then $\omega$ is a holomorphic quadratic differential.
\end{itemize}
\end{proposition}

\fi

\subsection{Riemannian Metric Structure}


The quad-metric has many special properties, which are summarized as the following theorem.

\begin{theorem}[Quad-mesh metric]
\label{thm:quad_mesh_metric}
If a Riemannian metric $\mathbf{g}$ with cone singularities is induced by a quad-mesh $(S,\mathcal{Q})$, then it has the following properties:
\begin{enumerate}
    \item The metric $\mathbf{g}$ is flat except at the singularities. The total curvature measures at the singularities equals to $2\pi$ multiply the Euler characteristic number of the surface. This is the \emph{Gauss-Bonnet condition}.
    \item In each quad-face, we can assign a cross (two orthogonal line segments, parallel to edges), then we get a global smooth \emph{cross field}. Namely, the holonomy group is a subgroup of $\mathcal{R}$. This is equivalent to the  \emph{holonomy condition}.
     \item If the surface has boundaries, then the cross field is aligned with the boundaries, namely the boundaries are either parallel or orthogonal to the axes of the crosses. This is called \emph{boundary alignment condition}.
    \item By connecting the horizontal or vertical edges of the quad-faces, geodesic loops can be obtained. If the quad-mesh is subdivided infinite many times, a geodesic lamination is obtained, whose leaves are closed loops. This is called the \emph{finite geodesic lamination condition}.
\end{enumerate}
\label{thm:main}
\end{theorem}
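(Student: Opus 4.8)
The plan is to establish the four conditions one at a time, since they are logically independent consequences of the single structural fact that $(S,\mathbf{g})$ is obtained by gluing unit Euclidean squares along their edges. Condition~1 is a combinatorial computation, Conditions~2 and~3 follow from tracking edge directions under isometric unfolding, and Condition~4 is built on top of the cross field produced in Condition~2. I expect the holonomy statement (Condition~2) to carry the conceptual weight, and the passage to a lamination in Condition~4 to be the most delicate analytic point.

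For the \emph{Gauss-Bonnet condition}, I would argue purely combinatorially. Away from the vertices each face is a flat unit square and each edge is a straight segment, so both the Gaussian curvature $K$ and the geodesic curvature $k_g$ of the edges vanish; all curvature is concentrated at the cone points. At an interior vertex the cone angle is $\text{Val}(v)\pi/2$, contributing curvature $2\pi-\text{Val}(v)\pi/2$, and at a boundary vertex the turning contributes $\pi-\text{Val}(v)\pi/2$. Summing and using $\sum_v \text{Val}(v)=4F$ (each quad has four corners), together with the quad-mesh incidence relations $4F=2E_{\mathrm{int}}+E_{\mathrm{bd}}$ and $E_{\mathrm{bd}}=V_{\mathrm{bd}}$, I would reduce the total curvature to $2\pi V_{\mathrm{int}}+\pi V_{\mathrm{bd}}-2\pi F$ and check that this equals $2\pi\chi(S)=2\pi(V-E+F)$. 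This is the discrete Gauss-Bonnet identity, so no analytic input beyond the flat-square model is needed.

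For the \emph{holonomy condition}, the key observation is that a cross is exactly the object stabilized by the group $\mathcal{R}$. In each face I assign the cross spanned by its two edge directions. When two faces $\sigma_i,\sigma_{i+1}$ are isometrically unfolded into the plane across their shared edge, the parallel transport is a pure translation and the common edge is preserved setwise; since one axis of each face's cross lies along that edge, the two crosses coincide as crosses. Hence the cross field is invariant under every elementary transport, and so the holonomy $\langle\gamma\rangle$ of any face loop fixes the cross, forcing $\langle\gamma\rangle\in\mathcal{R}$ and $\Pi(\mathcal{Q},\sigma_0)\subseteq\mathcal{R}$. This simultaneously produces the global smooth cross field. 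The \emph{boundary alignment condition} then follows immediately: every boundary edge is a side of a unit square and hence lies along one axis of the cross in its incident face, so the boundary is everywhere parallel or orthogonal to the cross axes.

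The remaining and hardest part is the \emph{finite geodesic lamination condition}. I would first show that a maximal chain of collinear edges is a geodesic: within a face it is straight, and at a regular (valence-$4$) vertex the cone angle is $2\pi$ with $\pi$ on each side, so the straight continuation is geodesic; such a chain meets only regular vertices because at a singularity the holonomy obstructs a straight crossing. Since $\mathcal{Q}$ is finite and the holonomy group is finite, each such chain must eventually revisit a face in the same direction and close into a geodesic loop. The subtle step is the infinite-subdivision limit: repeatedly subdividing each square into $2\times2$ squares refines the family of horizontal and vertical edge-geodesics into a nested family of closed geodesics, and I would argue that its closure is a geodesic lamination whose leaves are exactly these closed loops. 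Controlling this limit, in particular ruling out non-closed or dense leaves accumulating at the cone points, is where I expect the main technical effort to lie; the finiteness of $\mathcal{R}$ and the resulting rational translation-surface structure should be the tools that force every leaf to close.
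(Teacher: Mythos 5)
Your proposal is correct and follows essentially the same route as the paper: all four conditions are derived directly from the flat-unit-square gluing, with the cross field invariant under $\mathcal{R}$ giving the holonomy and boundary-alignment conditions, and a finiteness/pigeonhole argument giving the closed geodesics. You actually supply details the paper only asserts (the Euler-formula computation behind Gauss--Bonnet and the cross-invariance argument for $\Pi(\mathcal{Q},\sigma_0)\subseteq\mathcal{R}$); the only cosmetic difference is that the paper traces geodesics through face centers, which never meet vertices and enter each face at most twice, rather than following chains of collinear edges as you do.
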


\begin{proof}
\noindent{\emph{Gauss-Bonnet condition}} Each normal vertex has $0$ curvature. Each singular vertex has curvature measure $Ind(v_i) \frac{\pi}{2}$, the total Gaussian curvature satisfies the Gauss-Bonnet theorem:
\[
    \sum_{v_i} Ind(v_i) \frac{\pi}{2} = 2\pi \chi(S).
\]

\noindent{\emph{Holonomy Condition}} The holonomy group of the quad-mesh is a subgroup of $\mathcal{R}$. A cross is invariant under the $\mathcal{R}$ action. We can put a cross at the base face $\sigma_0$, whose two axes are aligned with the edges of the square, and parallel transport to all the faces. This gives a global smooth cross field.

\noindent{\emph{Boundary Alignment}} All the boundaries of the quad-mesh consists of the edges of square faces, therefore the cross axes are parallel or orthogonal to the boundaries.

\noindent{\emph{Finite Geodesic lamination}} We start from the center of a face, issue a geodesic parallel with the edges of the face. The geodesic won't enter the same face more than two times. The number of faces is finite, therefore, the geodesic is of finite length.
This holds for all the geodesics constructed this way.
\end{proof}

\begin{theorem}[Inverse Quad-mesh metric Theorem]
\label{thm:quad_mesh_metric_inverse}
Given a topologicla surface $S$ and a flat metric $\mathbf{g}$ with cone singularities $\gamma$, $\mathbf{g}$ has the following properties:
\begin{enumerate}
    \item The metric $\mathbf{g}$ is flat except at the singularities. The total curvature measures at the singularities equals to $2\pi$ multiply the Euler characteristic number of the surface. This is the \emph{Gauss-Bonnet condition}.
    \item The holonomy group is a subgroup of $\mathcal{R}$. This is the  \emph{holonomy condition}.
     \item There is a cross field obtained by parallel transporting a cross defined at one normal point of $S$, such that the cross field is aligned with the boundaries. This is the \emph{boundary alignment condition}.
    \item The stream lines parallel to the cross field are finite geodesic loops. This is the \emph{finite geodesic lamination condition}.
\end{enumerate}
Then a quadrilateral mesh can be constructed on $S$, such that the quad-mesh metric is $\mathbf{g}$.
\label{thm:main_inverse}
\end{theorem}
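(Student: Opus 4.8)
The plan is to invert the construction of Theorem~\ref{thm:main}: rather than starting from a combinatorial quad-mesh and reading off its metric, I would start from the metric $\mathbf{g}$ and its cross field and \emph{carve out} a quad-mesh whose faces are the unit squares of the cross-field grid. First I would integrate the cross field. By the holonomy condition the cross field is a globally well-defined smooth field on $S\setminus\Gamma$, where $\Gamma$ denotes the cone set, so its two orthogonal axes define two line fields, and these integrate to a pair of mutually orthogonal singular foliations $\mathcal{F}_h$ and $\mathcal{F}_v$ with singularities only at the cone points. Because $\mathbf{g}$ is flat away from $\Gamma$ and the axes are parallel, every leaf is a geodesic; by the finite geodesic lamination condition every leaf is moreover a closed loop of finite length, and at a cone point of cone angle $k\pi/2$ there are exactly $k$ separatrices of each foliation.

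Second, I would build the skeleton. Let $\Gamma_{\mathrm{sep}}$ be the union of all separatrices of $\mathcal{F}_h$ and $\mathcal{F}_v$ together with the boundary $\partial S$; by the boundary alignment condition each boundary arc is itself a horizontal or vertical geodesic, so it is compatible with the foliations. Using the finite geodesic lamination condition, $\Gamma_{\mathrm{sep}}$ is a finite embedded graph whose vertices are the interior and boundary singularities. The standard trajectory-structure argument for such flat foliations then shows that each connected component of $S\setminus\Gamma_{\mathrm{sep}}$ is foliated by parallel closed leaves and is isometric to an open Euclidean rectangle $(0,a)\times(0,b)$, with $\mathcal{F}_h$ and $\mathcal{F}_v$ pulling back to the horizontal and vertical product foliations. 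In components containing no separatrix, such as a torus summand, one first cuts along a single closed leaf of each foliation to open the cylinder into a rectangle. This is precisely the skeleton of rectangular patches.

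Third, I would lay the unit grid. On each rectangular patch I trace the horizontal and vertical geodesics at unit arclength spacing starting from a corner; their union with $\Gamma_{\mathrm{sep}}$ cuts the patch, and hence $S$, into cells that are $1\times 1$ Euclidean squares under $\mathbf{g}$. Declaring these squares to be the faces, their sides the edges, and the lattice points the vertices yields a cell complex $\mathcal{Q}$. Each face is by construction a canonical unit square, so the quad-mesh metric of $(S,\mathcal{Q})$ is exactly $\mathbf{g}$. Finally I would check the combinatorics: a lattice point interior to a patch has valence $4$ and is regular, while at a cone point of angle $k\pi/2$ the $k$ incident squares give topological valence $k$, so $\mathrm{Ind}=4-k$, reproducing the prescribed cone singularity; the Gauss--Bonnet condition guarantees the total index is consistent with $\chi(S)$, so no spurious singularities are forced.

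The hard part is grid-closure, that is, showing the unit grids on adjacent patches glue into a globally consistent cell complex. This requires that every rectangle side length be a positive integer and that grid lines traced from different patches meet edge-to-edge across $\Gamma_{\mathrm{sep}}$; equivalently, the transverse measure assigned to each separatrix by $\mathcal{F}_h$, and by $\mathcal{F}_v$, must be integral and compatible around every singularity. I expect this to be exactly where the finite geodesic lamination and holonomy conditions are indispensable: finiteness of all leaves forces the period lattice of the associated developing map to be commensurable with $\mathbb{Z}^2$ after the normalization fixing the square size, and the holonomy lying in $\mathcal{R}=\{e^{ik\pi/2}\}$ guarantees that the gluing maps across edges are integer translations composed with $k\pi/2$ rotations, which preserve the integer grid. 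Assembling these facts into a clean proof that the grids match simultaneously at every vertex and every edge is the main technical obstacle; everything else is bookkeeping.
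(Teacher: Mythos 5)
Your overall strategy is the same as the paper's: trace the geodesics aligned with the cross field, take the separatrices through the singularities as the skeleton, and subdivide the resulting rectangular patches into unit squares. The paper's own proof is only two sentences long and asserts exactly this, so your first three paragraphs are a faithful (and considerably more careful) elaboration of the intended argument; the cylinder/rectangle decomposition via the structure theory of flat foliations with closed leaves is the right way to make the skeleton step precise.

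The value of your write-up is that you have correctly isolated the step that neither you nor the paper actually proves: grid closure. The four hypotheses as stated do not force the rectangle dimensions (cylinder circumferences and saddle-connection lengths) to be integers, nor even to be rationally commensurable --- rescaling a genuine quad-mesh metric by an irrational factor preserves flatness, the Gauss--Bonnet condition, the holonomy condition, boundary alignment, and finiteness of all leaves, yet destroys any unit-square tiling. So either an additional integrality or normalization hypothesis on the transverse measures of the foliations is needed, or the conclusion must be weakened (e.g.\ to a quad-mesh whose faces are congruent squares of some common side length, or a combinatorial quad layout realizing $\mathbf{g}$ only after adjusting moduli). Your closing paragraph names this obstacle but explicitly leaves it open, so the proposal is not a complete proof; to be fair, the paper's proof does not address it at all, so you have identified a gap the paper glosses over rather than introduced a new one. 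One smaller inaccuracy: when the holonomy group contains the quarter-turn $e^{i\pi/2}$, the cross field does not split into two globally defined line fields $\mathcal{F}_h$ and $\mathcal{F}_v$ --- monodromy swaps the axes --- so you should either pass to the branched cover on which the holonomy reduces to $\{\pm 1\}$ or phrase the argument in terms of the single geodesic net generated by the cross directions.
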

\begin{proof}
if we have a metric $\mathbf{g}$ satisfies the above conditions, then the geodesics aligned with the cross field give the quad-mesh $\mathcal{Q}$. The geodesics through the singularities are the separatrices.
\end{proof}

\begin{figure}[h!]
\centering
\begin{tabular}{cc}
\includegraphics[width=0.85\textwidth]{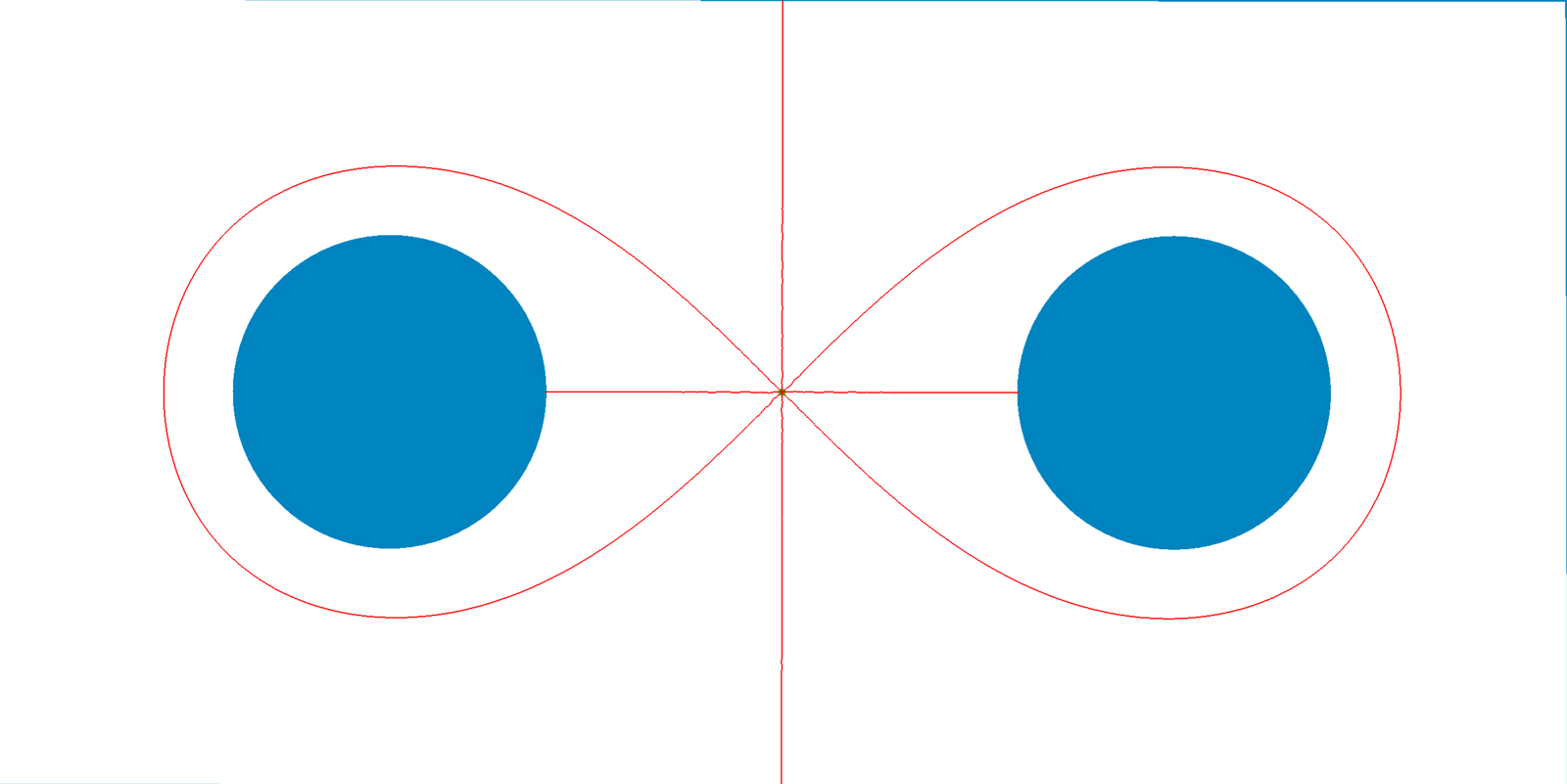}
\end{tabular}
\caption{Gauss-Bonnet condition: a planar domain with two inner boundaries. The center singularity is with index $-4$, everywhere else the curvature is $0$. The geodesic curvature along the boundaries are also $0$. The geodesics through the singularity are drawn as red curves.}
\label{fig:1saddle}
\end{figure}

As shown in Fig.~\ref{fig:1saddle}, given a planar rectangle with two circular holes, a special flat metric is computed with a single singularity, whose index is $-4$. The curvautre is $0$ every where else, including the boundaries. Therefore, the total curvature is $-2\pi$, the Euler characteristic number is $-1$, the Gauss-Bonnet formula holds. The red curves are geodesics through the singularity, they are perpendicular to the boundaries, or form geodesic loops. They are either parallel or orthogonal to each other.

\begin{figure}[h!]
\centering
\begin{tabular}{cc}
\includegraphics[width=0.85\textwidth]{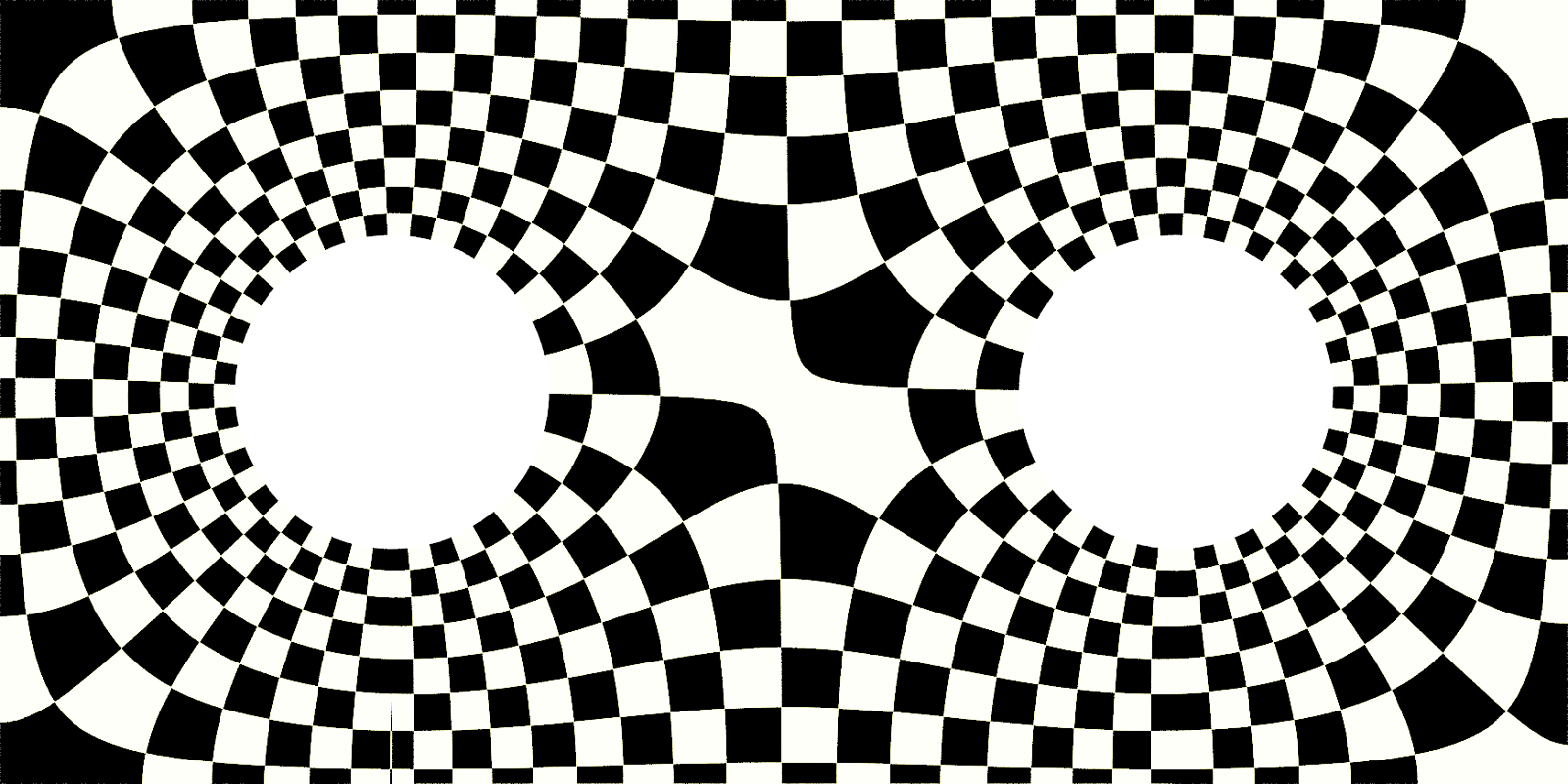}\\
\end{tabular}
\caption{Holonomy condition: the holonomy group of the quad-mesh is trivial.}
\label{fig:1saddle_qud_mesh}
\end{figure}

Fig.~\ref{fig:1saddle_qud_mesh} shows the holonomy condition. The quad-mesh is depicted by checker-board texture mapping. Each checker represents a quadrilateral face. The parallel transportation along two inner boundaries induces trivial holonomy. Similarly, the holonomy of the loop surrounding the singularity is also trivial.

\begin{figure}[h!]
\centering
\begin{tabular}{cc}
\includegraphics[width=0.85\textwidth]{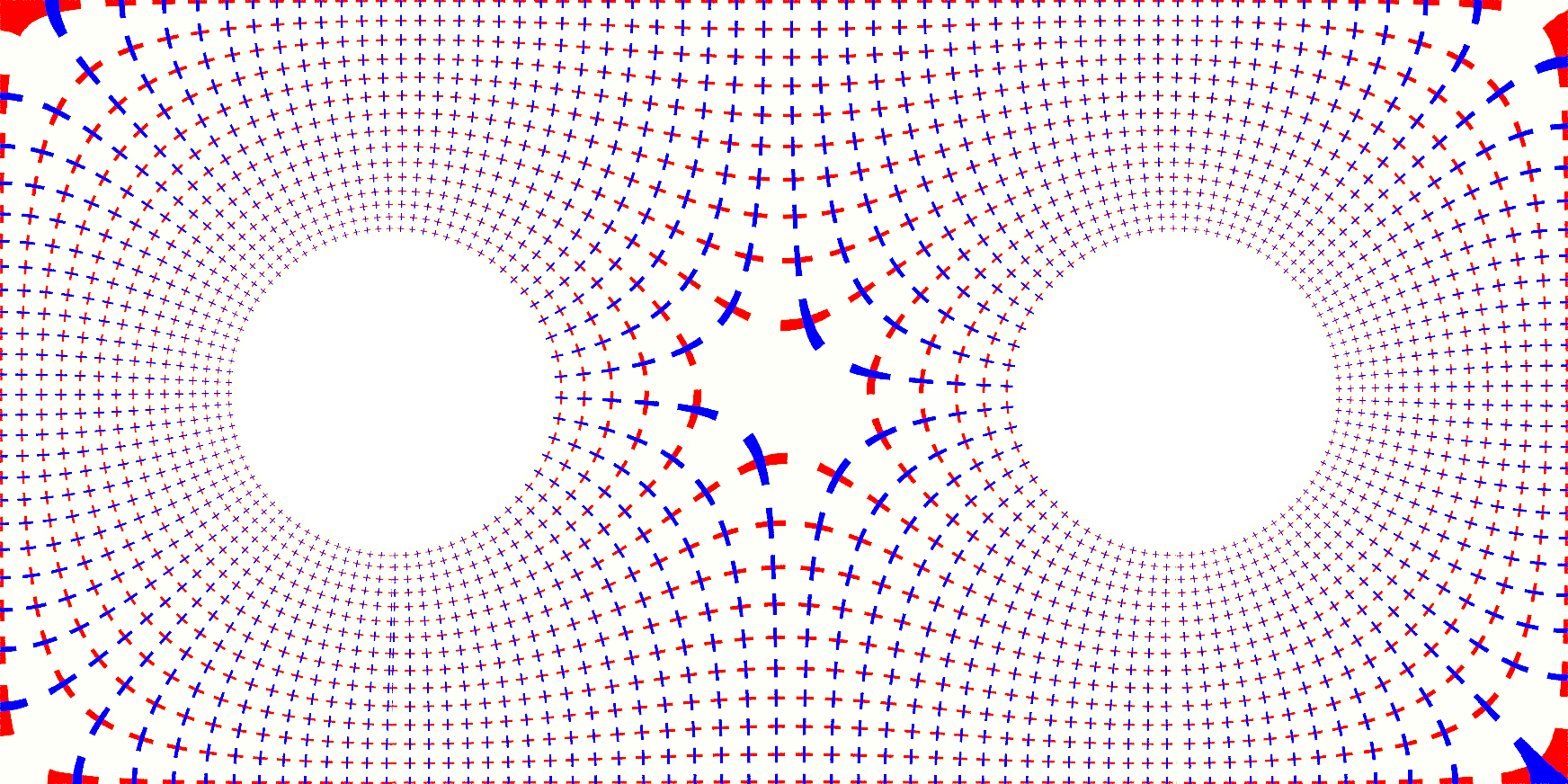}
\end{tabular}
\caption{Boundary alignment condition: the cross field is aligned with all the boundaries.}
\label{fig:1saddle_cross_field}
\end{figure}
Fig.~\ref{fig:1saddle_cross_field} shows the boundary alignment condition. We put a cross in each face, whose axis is aligned with the edges, then we obtain a smooth cross field. The cross field is aligned with all the boundaries.

\begin{figure}[h!]
\centering
\begin{tabular}{cc}
\includegraphics[width=0.85\textwidth]{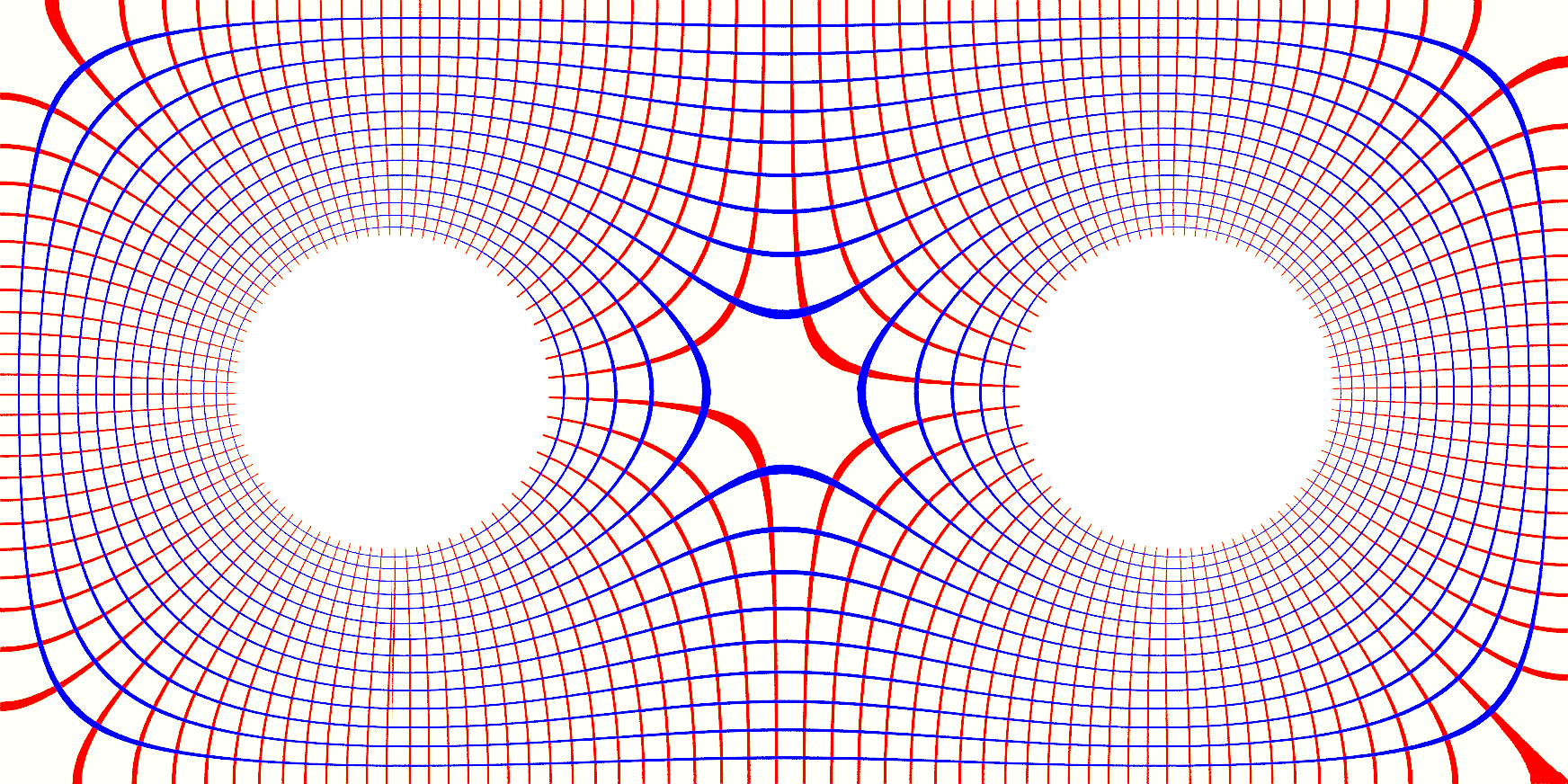}
\end{tabular}
\caption{Finite geodesic condition: all the geodesics aligned with the cross field are finite.}
\label{fig:1saddle_geodesic}
\end{figure}

Fig.~\ref{fig:1saddle_geodesic} shows the finite geodesic condition. All the geodesics parallel to the edges of the faces either terminate at the boundaries or the singularity, or form loops.

\begin{figure}[h!]
\centering
\begin{tabular}{cc}
\includegraphics[width=0.98\textwidth]{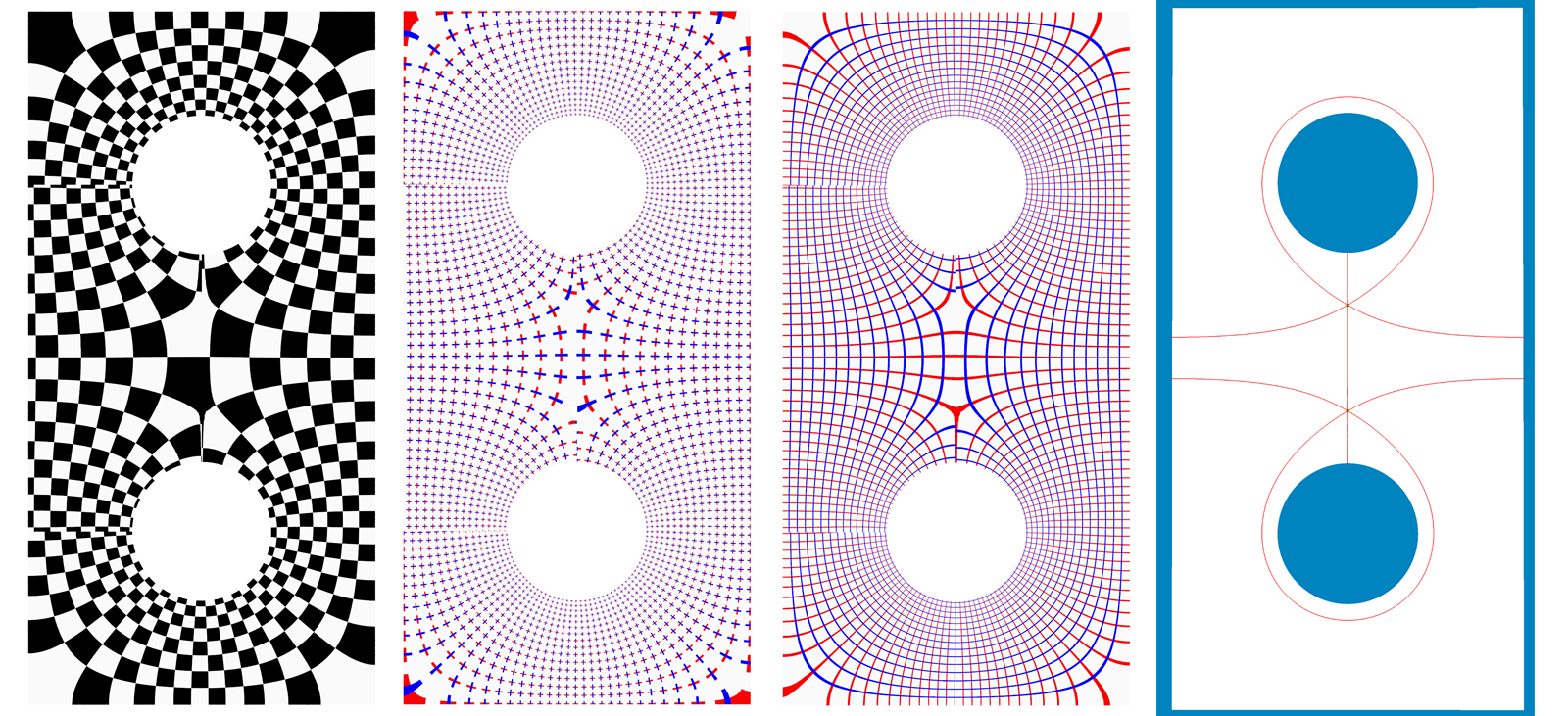}
\end{tabular}
\caption{Two singularity configuration, corresponding to a holomorphic quadratic form.}
\label{fig:2saddle}
\end{figure}

Fig.~\ref{fig:2saddle} illustrates the same surface with $2$ valence-6 singularities, each has $-\pi$ Gaussian curvature measure. From left to right, the quad-mesh, the cross field, the geodesics, the singularities and the geodesics through them and perpendicular to the boundaries. The flat metric with the cone singularities satisfies all the 4 conditions.

\begin{figure}[h!]
\centering
\begin{tabular}{cc}
\includegraphics[width=0.98\textwidth]{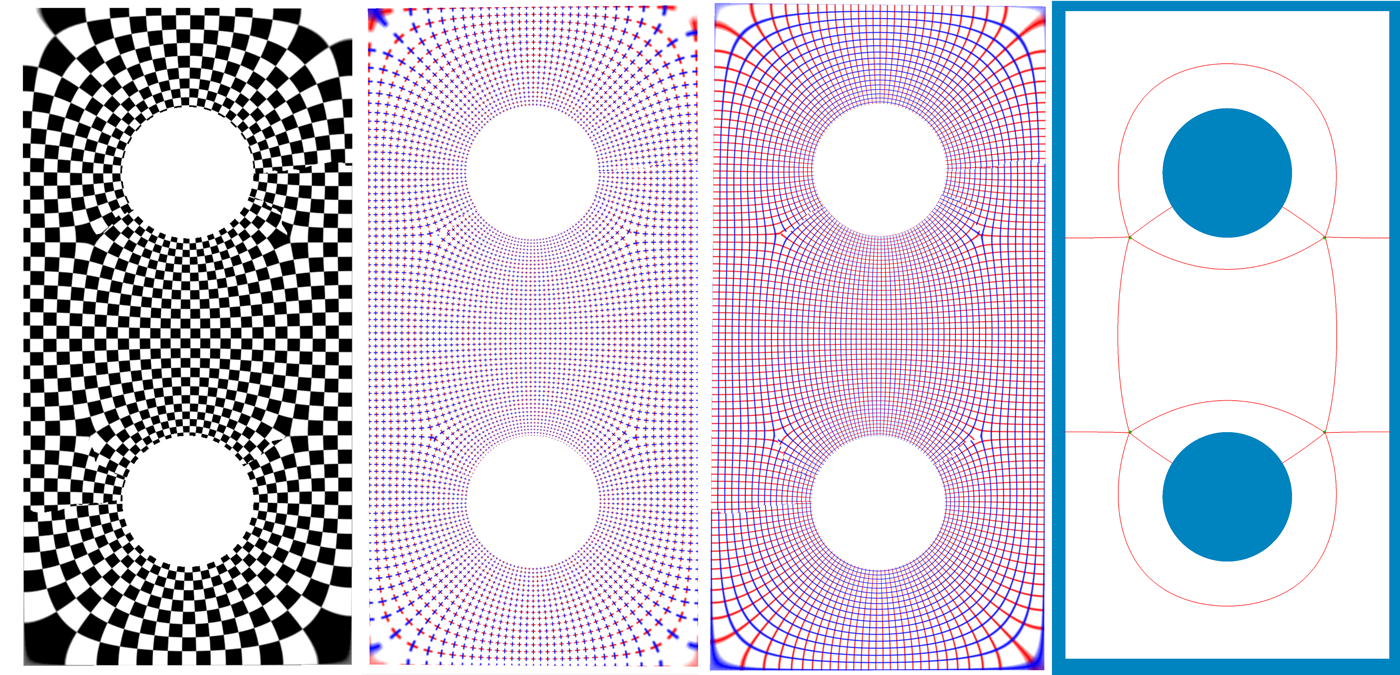}
\end{tabular}
\caption{Four singularity configuration, corresponding to a holomorphic quartic form.}
\label{fig:4saddle}
\end{figure}

Fig.~\ref{fig:4saddle} shows the same surface with $4$ valence-5 singularities, each has $-\frac{\pi}{2}$ Gaussian curvature measure. The flat metric with cone singularities satisfies all the 4 conditions.

\if 0
\begin{proposition}
Suppose $S$ is a genus zero surface with boundaries, $\mathbf{g}$ is a flat metric with cone singularites $\Gamma$. Each singularity has Gaussian curvature measure $\frac{k}{2}\pi$, $k\in\mathbb{Z}$. The total curvature of each boundary component is $2k\pi$, $k\in \mathbb{Z}$. Then the holonomy group induced by $\mathbf{g}$ is a subgroup of $\mathcal{R}$.
\end{proposition}
This means for genus zero surfaces, the Gauss-Bonnet condition and boundary curvature condition imply the holonomy condition.

\begin{proof}
Assume the boundary components of the surface $S$ are
\[
    \partial S = \gamma_0 - \gamma_1 \cdots - \gamma_n,
\]
singularity set is $\Gamma=\{v_1,v_2,\cdots,v_m\}$, each $v_i$ has Guassian curvature $K_i=k_i\frac{\pi}{2}$.

Because the surface is of genus zero, the surface has no handles. Fix a base face $\sigma_0$, the fundamental group $\pi_1(S-\Gamma,\sigma_0)$ has generators $\gamma_1,\gamma_2,\dots,\gamma_n$, and $\tau_1,\tau_2,\dots,\tau_m$, where $\tau_k$ is a loop surrounding the singularity $v_k$. The holonomy of each generator, $\langle \gamma_i \rangle$ or $\langle \tau_j \rangle$, is in $\mathcal{R}$.

\end{proof}
\fi

\begin{figure}[h!]
\centering
\begin{tabular}{cc}
\includegraphics[width=0.98\textwidth]{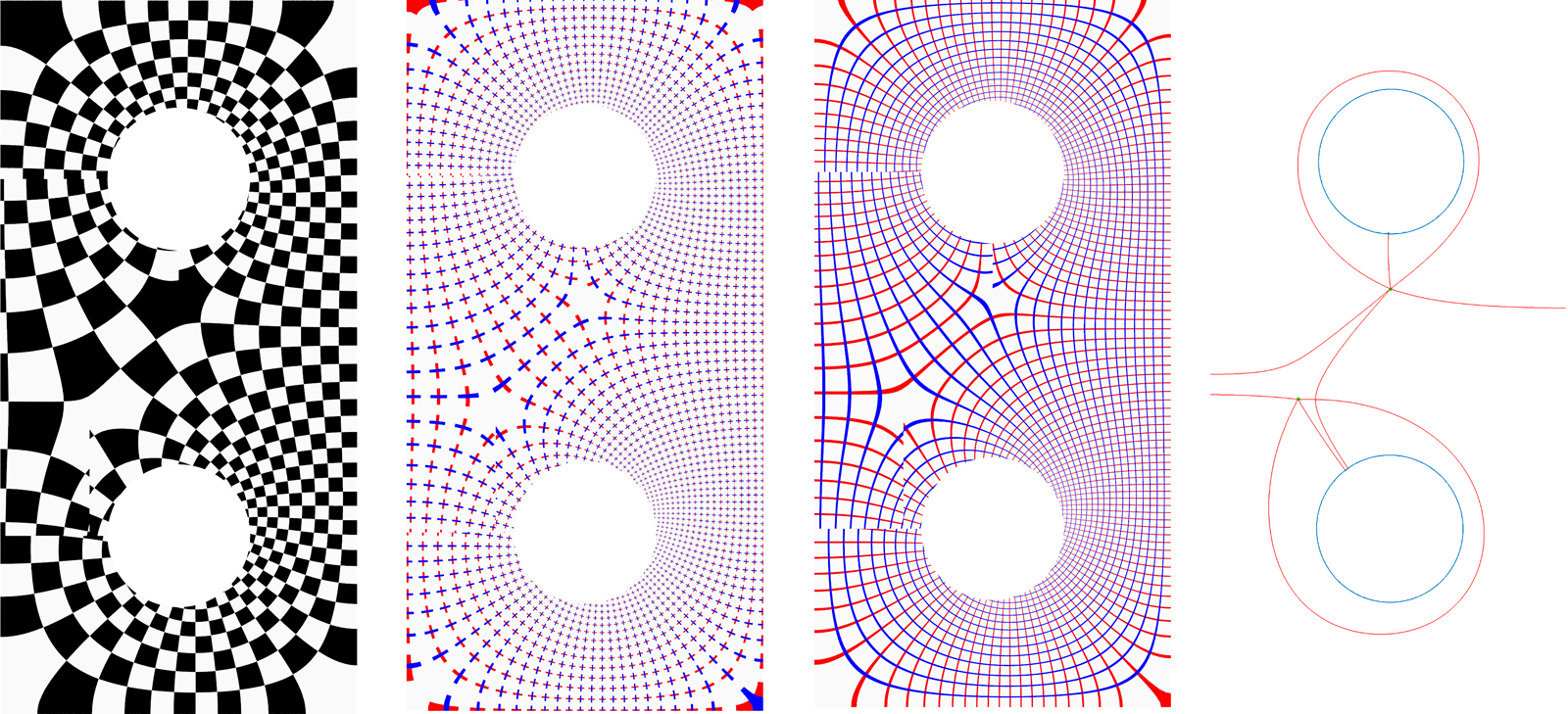}
\end{tabular}
\caption{Two singularity configuration, violating the boundary alignment condition.}
\label{fig:2saddle_twisted}
\end{figure}

Fig.~\ref{fig:2saddle_twisted} shows the same surface with different positions of singularities, the flat metric $\mathbf{g}$ satisfies the Gauss-Bonnet condition. The global smooth cross field in the 2nd frame shows the metric satisfies the holonomy condition. But the cross fields are not aligned with the inner boundaries, hence the geodesics are not parallel or orthogonal to the inner boundaries as shown in the 3rd frame.

\begin{figure}[h!]
\centering
\begin{tabular}{cc}
\includegraphics[height=0.45\textwidth]{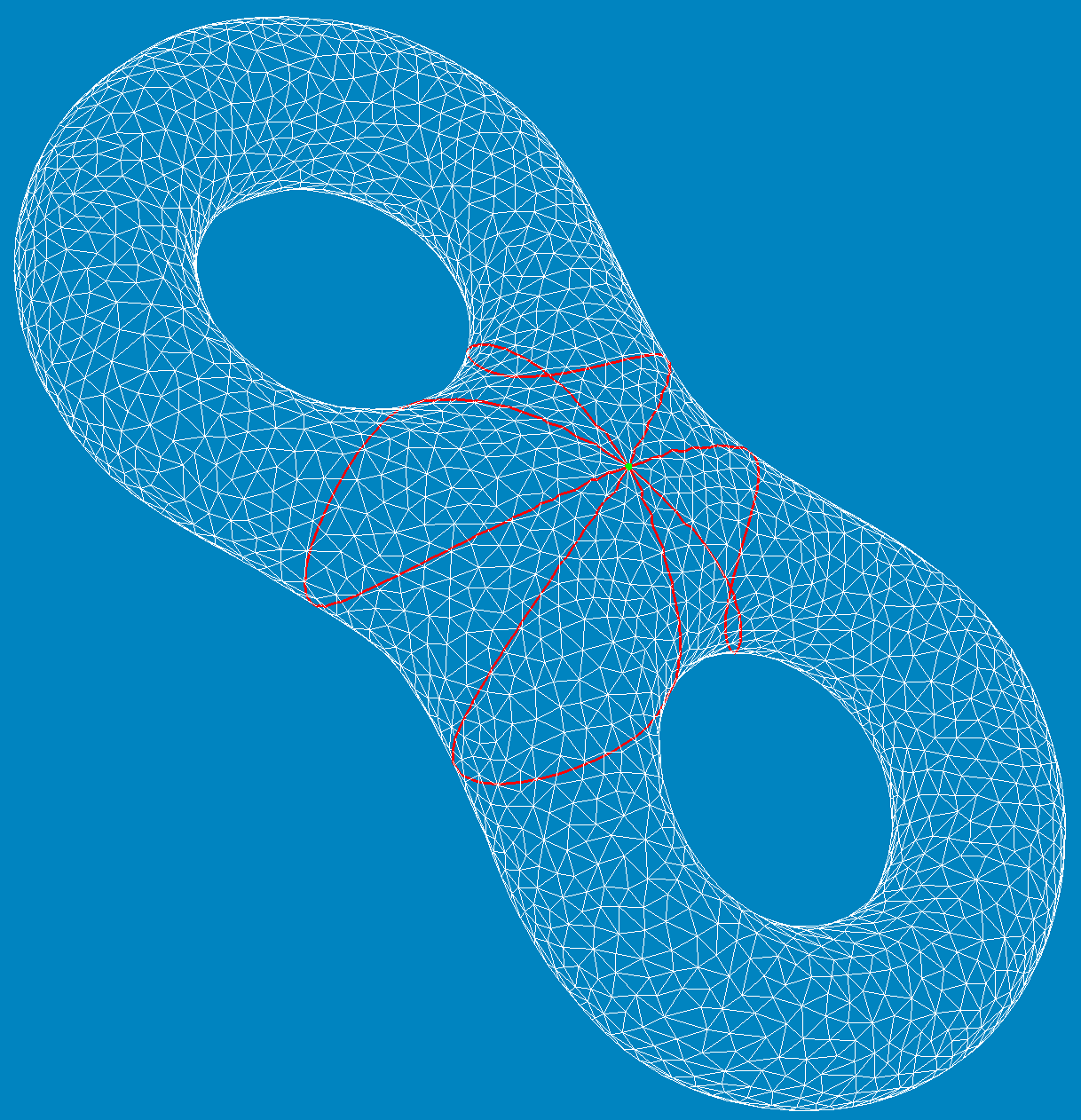}&
\includegraphics[height=0.45\textwidth]{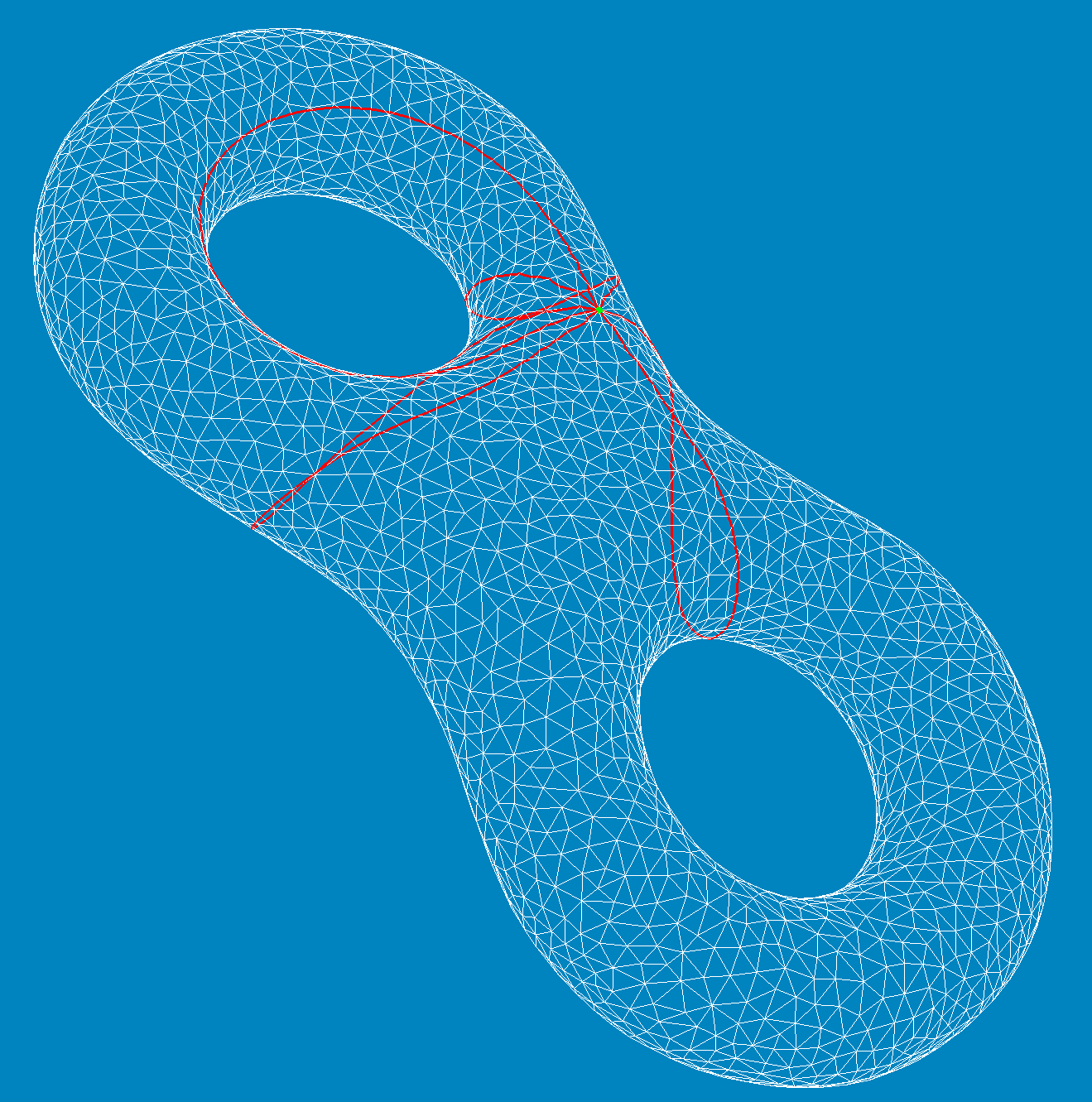}
\end{tabular}
\caption{A flat metric with a cone singularity, whose curvature measure is $-4\pi$. The metric violates the holonomy condition.}
\label{fig:eight_holonomy}
\end{figure}

\begin{lemma} Suppose $S$ is a genus zero surface, $\mathbf{g}$ is a flat metric with internal cone singularities $\Gamma_1=\{p_1,p_2,\cdots,p_n\}$; along the boundary components there are boundary singularities
$\Gamma_2=\{q_1,q_2,\cdots,q_m\}$,
where the discrete Gaussian curvature measure of $p_i$ is $\frac{k_i}{2}\pi$, where $k_i\in \mathbb{Z}$ is an integer, the curvature measure of $q_j$ is $\frac{l_j}{2}\pi$, $l_j\in \mathbb{Z}$.
Furthermore $\mathbf{g}$ satisfies the Gauss-Bonnet condition,
\[
    \sum_{i=1}^{n} k_i + \sum_{j=1}^{m} l_j = 4\chi(S-\Gamma),
\]
then $\mathbf{g}$ also satisfies the holonomy condition.
\end{lemma}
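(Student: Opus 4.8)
The plan is to exploit the fact that for a genus zero surface the holonomy is completely determined by its values on a simple generating set of $\pi_1(S-\Gamma_1)$, and to show that each such value already lies in $\mathcal{R}$. Since the metric is flat away from $\Gamma=\Gamma_1\cup\Gamma_2$, parallel transport descends to a homomorphism $\varphi:\pi_1(S-\Gamma_1)\to SO(2)$ whose image is exactly the holonomy group. Because $\mathcal{R}$ is a subgroup of $SO(2)$ and the image of a homomorphism is generated by the images of any generating set, it suffices to exhibit a generating set of $\pi_1(S-\Gamma_1)$ on which $\varphi$ takes values in $\mathcal{R}$.

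First I would fix the topology. As $S$ has genus zero, $S-\Gamma_1$ is homotopy equivalent to a planar domain with $n$ punctures, so $\pi_1(S-\Gamma_1)$ is free and is generated by the loops $\tau_1,\dots,\tau_n$, each a small loop encircling one interior cone point $p_i$, together with loops $\delta_1,\dots,\delta_b$, each running parallel to one boundary component $\partial_k$ of $S$. A single relation $\prod_i\tau_i\prod_k\delta_k=1$ holds, but the argument will not require it: membership of every generator in $\mathcal{R}$ already forces the whole image into $\mathcal{R}$.

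Next I would evaluate $\varphi$ on the two kinds of generators. For an interior loop $\tau_i$ the enclosed region is a disk meeting $\Gamma$ only at $p_i$, and since the metric is flat there, the standard identity ``holonomy $=$ enclosed curvature'' gives $\langle\tau_i\rangle$ equal to rotation by the angle deficit $2\pi-\theta_i$, which is precisely the curvature measure $\tfrac{k_i}{2}\pi$; as $k_i\in\mathbb{Z}$ this is a multiple of $\pi/2$, hence in $\mathcal{R}$. The boundary generators are the crux. Here I would push $\delta_k$ onto $\partial_k$, which in this flat-cone setting is piecewise geodesic with corners exactly at the boundary singularities $q_j\in\partial_k$ (the normal boundary vertices lie on straight, geodesic arcs). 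Developing a collar of $\partial_k$ into the Euclidean plane sends the geodesic arcs to straight segments, and each corner turns the developed tangent by the exterior angle $\epsilon_j=\pi-\alpha_j$, which equals the boundary curvature measure $\tfrac{l_j}{2}\pi$. Equivariance of the developing map means the tangent at the end of one circuit differs from the tangent at the start by the rotational part of the deck transformation, i.e.\ by the holonomy angle, so tracking the total turning yields
\[
\langle\delta_k\rangle \;\equiv\; \sum_{q_j\in\partial_k}\epsilon_j \;=\; \sum_{q_j\in\partial_k}\frac{l_j}{2}\pi \pmod{2\pi},
\]
again a multiple of $\pi/2$ because each $l_j\in\mathbb{Z}$, so $\langle\delta_k\rangle\in\mathcal{R}$.

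Combining these, $\varphi$ carries every generator into $\mathcal{R}$, whence $\Pi(\mathcal{Q},\sigma_0)=\varphi(\pi_1(S-\Gamma_1))\subseteq\mathcal{R}$, which is the holonomy condition. The Gauss-Bonnet hypothesis enters only as the consistency check that these per-generator rotations respect the relation $\prod_i\tau_i\prod_k\delta_k=1$, so it imposes no extra burden on the argument. I expect the real obstacle to be the boundary-holonomy computation: justifying rigorously that pushing $\delta_k$ to the piecewise-geodesic boundary leaves its holonomy unchanged (flatness of the collar), that the turning at each corner is exactly the exterior angle, and that this exterior angle equals the prescribed curvature measure $\tfrac{l_j}{2}\pi$ with the correct sign convention. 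The remaining steps are routine uses of the homomorphism property and the disk holonomy identity.
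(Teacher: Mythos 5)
Your proposal is correct and follows essentially the same route as the paper's proof: take the generators of $\pi_1(S-\Gamma_1)$ to be the loops $\alpha_i$ around interior cone points and the loops along boundary components, observe that the holonomy of each $\alpha_i$ is the curvature measure $\tfrac{k_i}{2}\pi$ and that the holonomy of each boundary loop is (up to sign and a multiple of $2\pi$) the sum $\sum_j \tfrac{l_j}{2}\pi$ of the corner curvatures on that component, and conclude that every generator maps into $\mathcal{R}$. Your write-up is more careful than the paper's --- in particular you correctly flag the developing-map/exterior-angle justification for the boundary generators and note that the Gauss--Bonnet hypothesis is only a consistency condition --- but the underlying argument is the same.
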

\begin{proof}
Suppose the boundary components of the surface $S$ are
\[
    \partial S = \gamma_0 - \gamma_1 \cdots \gamma_t,
\]
and the singularity set is $\Gamma=\{p_1,p_2,\cdots,p_n\}$. Let $\alpha_j$ is the loop around $p_j$, then the fundamental group of the surface is given by
\[
    \pi_1(S-\Gamma_1) = \langle \gamma_1, \gamma_2, \cdots, \gamma_t, \alpha_1,\cdots, \alpha_n\rangle.
\]
The holonomies of the generators
\[
    \langle \alpha_i \rangle = \frac{k_i}{2}\pi,
\]
the holonomy of each boundary component $\gamma_j$ equals to the total Gaussian curvature measures of all corner singularities along $\gamma_j$. Hence, the metric satisifies the holonomy condition.
\end{proof}

But if the surface is of high genus, then a flat metric $\mathbf{g}$ satisfying Gauss-Bonnet may not satisfy the holonomy condition. Fig.~\ref{fig:eight_holonomy} shows a metric satisfies the Gauss-Bonnet condition, but violates the holonomy condition. We use discrte Ricci flow method to compute a flat metric $\mathbf{g}$ on a genus two surface, such that the unique singularity $p$ is with $-4\pi$ Gaussian curvature. According to the discrete uniformization theorem proved in \cite{Gu_JDG_2018}, such kind of metric exists and is unique upto scaling. We calculate several geodesic loops through the singularity $p$. Suppose $\gamma$ is a geodesic loop, $\gamma(0)=\gamma(1)=p$. If the metric $\mathbf{g}$ satisfies the holonomy condition, then the angle between two tangent vectors $\gamma'(0)$ and $\gamma'(1)$ is $\frac{k}{2}\pi$ under $\mathbf{g}$, where $k$ is an integer. We measure such angles of several geodesic loops through $p$, most of them are not $\frac{k}{2}\pi$. Hence the metric $\mathbf{g}$ doesn't satisfy the holonomy condition.

\section{Algorithm}
\label{sec:algorithm}

The metric based quad-mesh generation aims at computing a flat cone metric with singularities, satisfying the condition in theorem \ref{thm:quad_mesh_metric}, then find two families of orthogonal geodesics to generate the quadrilateral mesh.

\subsection*{Algorithmic Pipeline}

Suppose the input surface $S$ is discretized as a triangular mesh. The algorithm pipeline is as follows:
\begin{enumerate}
    \item Determine the positions and indices of singularities $\Gamma$;
    \item Compute a flat metric $\mathbf{g}_0$ with cone singularities using discrete surface Ricci flow algorithm;
    \item Compute a cut graph $L$ of the surface, such that $S-L$ is a topological disk. Furthermore, for each singularity $v_i\in \Gamma$, find the shortest path connecting $v_i$ and the boundary of $S-L$, the shortest paths are added to $L$;
    \item Isometrically immerse $(S-L,\mathbf{g}_0)$, the image is a planar immersed polygon $P$. Each pair of dual boundary segments of the polygon  differ by a planar rigid motion.
    \item Conformal structure deformation. Adjust the boundary of $P$, such that each pair of dual boundary segments of $P$ differ by a translation and a rotion in $\mathcal{R}$, and all the boundary segments of $S$ are horizontal or vertical. Use harmonic map to deform the interior of $P$. This induces a new flat metric $\mathbf{g}$, and a cross field $\omega$ satisfying the boundary condition.
    \item Compute geodesics under $\mathbf{g}$ align the cross field $\omega$. The geodesics through the singularities defines the skeleton, further subdivisions of the skeleton gives the quad-mesh.
\end{enumerate}

In the following, we explain each step in details.

\subsection{Singularity Location}
The most crucial step of the algorithm pipeline is to determine the positions and indexes of the singularities. One way is to manually input the positions and indexes using heuristics, then generate the skeleton. In general, the singularity configurations need to be adjusted in order to improve the mesh quality.

An automatic way to determine the singularities to use the poles and zeros of an Abel differential on the surface. The details will be introduced in our later submission \cite{Zheng_2019}.


\subsection{Discrete Surface Ricci Flow}
Given a polyhedral surface with a triangulation, the surface has induced Euclidean metric, namely, a triangle mesh $M$.

Each face is a Euclidean triangle $[v_i,v_j,v_k]$ with edge lengths $\{l_i,l_j,l_k\}$. The corner angles and the edge lengths satisfies the cosine law:
\[
    l_i^2 = l_j^2 + l_k^2 - 2 l_jl_k\cos \theta_i.
\]
The discrete vertex Gaussian curvature is defined as angle deficit,
\[
    K(v_i) = \left\{
    \begin{array}{rl}
    2\pi - \sum_{jk}\theta_i^{jk}& v_i\not\in \partial M\\
    \pi - \sum_{jk}\theta_i^{jk}& v_i\not\in \partial M\\

    \end{array}
    \right.
\]
The total Gaussian curvature satisfies the Gauss-Bonnet theorem,
\[
    \sum_{i} K(v_i) = 2\pi \chi(M).
\]

We associate each vertex $v_i$ with a discrete conformal factor $u_i$, then the vertex scaling operator is defined as
\[
    l_{ij} = e^{u_i}\beta_{ij} e^{u_j},
\]
where $l_{ij}$ is the length of the edge $[v_i,v_j]$, $\beta_{ij}$ is the initial edge length. Given target curvature $\bar{K}:V\to\mathbb{R}$, the discrete Ricci flow is defined as
\[
    \frac{d u_i}{dt} = \bar{K}_i - K_i.
\]
During the flow, the triangulation is updated to be Delaunay. Discrete Ricci flow is the gradient flow of the convex energy,
\[
    E(u_1,u_2,\dots,u_n) = \int_{0}^{(u_1,\dots,u_n)} \sum_{i=1}^n (\bar{K}_i - K_i) du_i.
\]
This energy can be optimized using Newton's method.

\subsection{Cut Graph}

Given a triangle mesh $M$ with singularities, the singularity set is denoted as $\Gamma$. First, we compute the cut graph $G$ of the mesh $M$.

Let $\bar{M}$ be the dual mesh of $M$, then we compute a spanning tree $\bar{T}$ of $\bar{M}$. The cut graph $G$ is defined as
\[
    G:=\{e\in M| \bar{e}\not\in \bar{T}\},
\]
then $M-G$ is a topological disk. The for each singularity $p_k\in \Gamma$, find the shortest path $\gamma_k$ from $p_k$ to the cut graph, then
\[
    L = G  \bigcup_{p_i\in \Gamma} \gamma_i
\]
$\tilde{M}=M-L$ is a topological disk.

\subsection{Isometric Immersion}

We can isometrically immerse $\tilde{M}$ with the flat metric $\mathbf{g}$ onto the plane, the immersion is denoted as $\varphi: \tilde{M}\to\mathbb{R}^2$. Then $\varphi$ assigns planar coordinates for each vertex in $\tilde{M}$.

Each triangle face of $\tilde{M}$ corresponds to a face of $M$, each vertex $\tilde{v}_i\in \tilde{M}$ corresponds to a unique vertex $v_j \in M$. This defines a simplicial projection map $\psi:\tilde{M}\to M$, which is a covering map.

Each face $f_i$ consists of three vertices, the face-vertex pair $(f_i,v_j)$ represents the corner in $f_i$ with $v_j$ as the apex. Then the corners of $\tilde{M}$ and those of $M$ have one-to-one correspondence. In the covering mesh $\tilde{M}$,
we define the texture coordinates of a corner $(f_i,v_j)$ as the texture coordinates of the vertex $v_j$. In the base mesh $M$, the texture coordinates of a corner equal to the texture coordinates of its preimage.

\subsection{Conformal Structure Deformation}

Let $e_{ij}\in M$ be an edge, it is adjacent to two faces $f_k=[v_i,v_j,v_k]$  and $f_l=[v_j,v_i,v_l]$, the chart transition mapping $\varphi_{ij}:\mathbb{R}^2\to\mathbb{R}^2$ satisfies the condition
\[
    \varphi_{ij}(\varphi(f_k,v_i)) =\varphi(f_l,v_i),
    \varphi_{ij}(\varphi(f_k,v_j)) =\varphi(f_l,v_j).
\]
The chart transition mappings are identity on edges not in $L$.

Consider the graph $L\subset M$, each vertex $v_i\in L$ has a topological valence in $L$, which is the number of edges in $L$ and adjacent to $v_i$. We call vertices in $L$ with valence $2$ as normal vertices, otherwise as \emph{nodes}. The nodes divide the graph $L$ into \emph{segments}. Each $\gamma_i$ is a segment, the boundary $\partial M$ is divided into segments, denoted as $\{\tau_1,\tau_2,\dots, \tau_k\}$.

In $\tilde{M}$, the preimage of $L$ becomes the boundary $\partial \tilde{M}$. Each $\gamma_i$ has two preimages, denoted as $\tilde{\gamma}_i^+$ and $\tilde{\gamma}_i^-$, each $\tau_j$ has a unique preimage $\tilde{\tau_j}$. $\varphi(\tilde{\gamma}_i^+)$ and
$\varphi(\tilde{\gamma}_i^-)$ differ by a planar rigid motion.

We can modify the coordinates of $\varphi(\tilde{\gamma}_i^+)$, $\varphi(\tilde{\gamma}_i^-)$, and $\varphi(\tilde{\tau}_j)$ by rotation and translation , such that
\begin{itemize}
    \item All $\varphi(\tilde{\tau}_j)$'s are horizontal or vertical;
    \item $\varphi(\tilde{\gamma}_i^+)$ and $\varphi(\tilde{\gamma}_i^-)$ differ by a translation and a rotation in $\mathcal{R}$.
\end{itemize}
After modifying the boundary coordinates, we calculate the coordinates of the interior vertices of $\tilde{M}$ using a harmonic map. The new local coordinates gives a new Riemannian metric $\mathbf{g}$, which satisfies the holonomy condition and the boundary alignment condition.

\subsection{Tracing Geodesics}


We compute the exact geodesic on polyhedral mesh with the metric $\mathbf{g}$ using the algorithm in \cite{Surazhsky:2005:FEA:1073204.1073228}. First, we compute the geodesics issued from the singularities, which are orthogonal to the boundaries, or connected with other singularities. We call these special geodesics as \emph{critical trajectories}.

The critical trajectories are the separatrices, which partition the surface into Euclidean rectangles under the metric $\mathbf{g}$. These rectangles are the coarsest level of the quad-mesh, or the \emph{skeleton} of the quad-mesh. Then we subdivide the skeleton to form the refined quad-mesh.



\subsection{One Computational Example}

We use a simple example to illustrates the basic computation pipeline.

\begin{figure}[h!]
\centering
\includegraphics[width=0.85\textwidth]{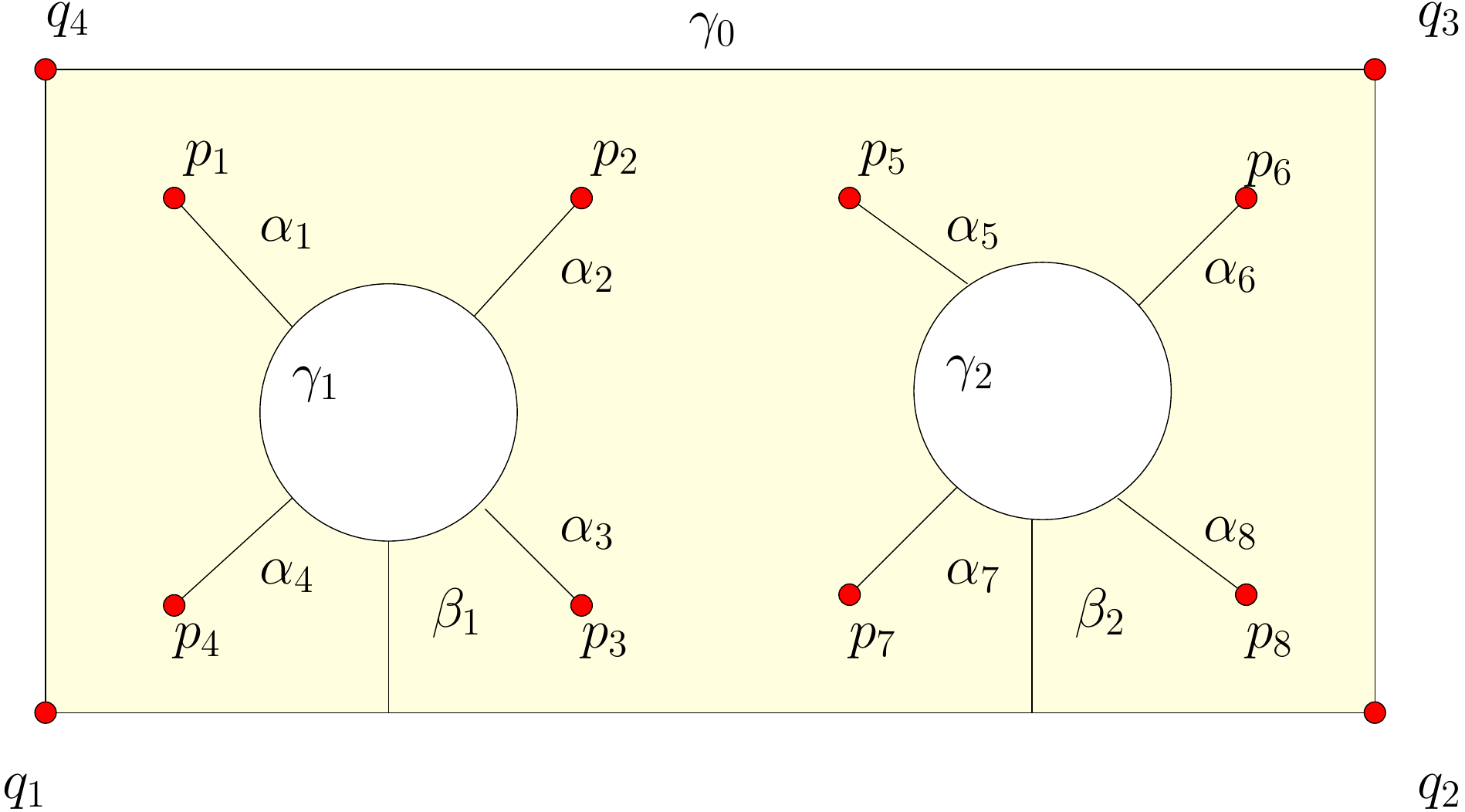}
\caption{Schematic layout.}
\label{fig:schemetic_layout}
\end{figure}

\noindent{\textbf{1. Singularity Allocation and Cut Graph}} Fig.~\ref{fig:schemetic_layout} shows a schematic 2D layout of a planar domain $\Omega$, a rectangle with two circular holes. The domain has three boundary components, $\{\gamma_0,\gamma_1,\gamma_2\}$, where $\gamma_0$ is the exterior boundary component, $\gamma_1,\gamma_2$ are inner boundary components.

\begin{figure}[h!]
\centering
\includegraphics[width=0.85\textwidth]{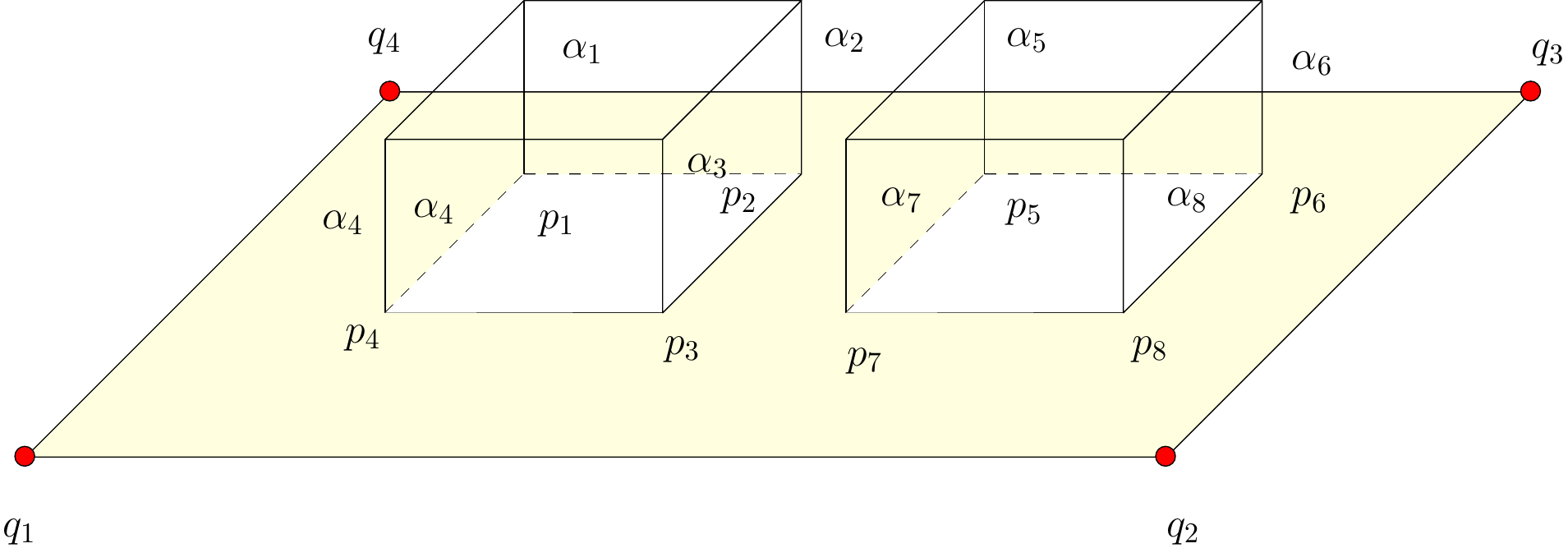}
\caption{Desired flat metric with cone singularities.}
\label{fig:embedding}
\end{figure}

Fig.~\ref{fig:embedding} shows the desired target metric with cone singularities. Following the design, four interior singularities are manually assigned, $\{p_1,p_2,p_3,p_4\}$, the valences of them are $5$. Four boundary singularities (corner points) are prescribed, $\{q_1,q_2,q_3,q_4\}$, their valences are $1$.

Fig.~\ref{fig:schemetic_layout} also shows the cut graph $L$. From each interior singularity $p_k$ to the inner boundary, we compute a shortest path $\alpha_k$. From each inner boundary component $\gamma_k$ to the exterior boundary $\gamma_0$, we draw a shortest path $\beta_k$.

\begin{figure}[h!]
\centering
\includegraphics[width=0.85\textwidth]{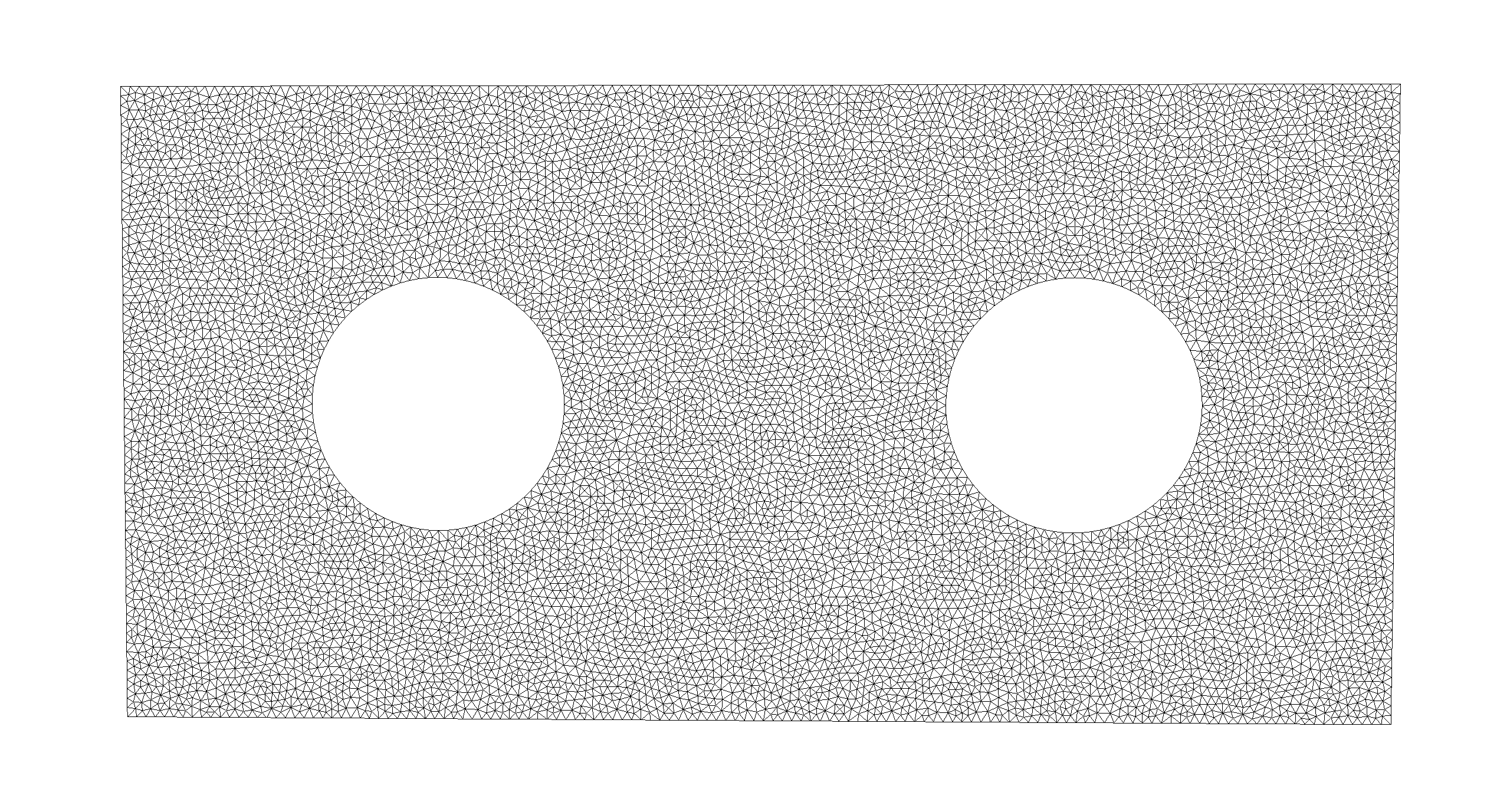}
\caption{Triangular mesh.}
\label{fig:triangulation}
\end{figure}

\noindent{\textbf{2. Ricci flow for Flat Metric With Cone Singularities}}
Fig.~\ref{fig:triangulation} shows the triangular mesh of the domain, denoted as $M$. We use conventional planar mesh generation method to triangulate the planar domain $\Omega$  using gmsh, all the inner singularities are constrained to be vertices of the triangular mesh.

We use discrete surface Ricci flow algorithm to compute a flat metric with cone singularities at the singularities, where
\[
    K(p_i) = -\frac{\pi}{2}, K(q_j) = \frac{\pi}{2}.
\]
The flat metric is denoted as $\mathbf{g}$.

\begin{figure}[h!]
\centering
\includegraphics[width=0.85\textwidth]{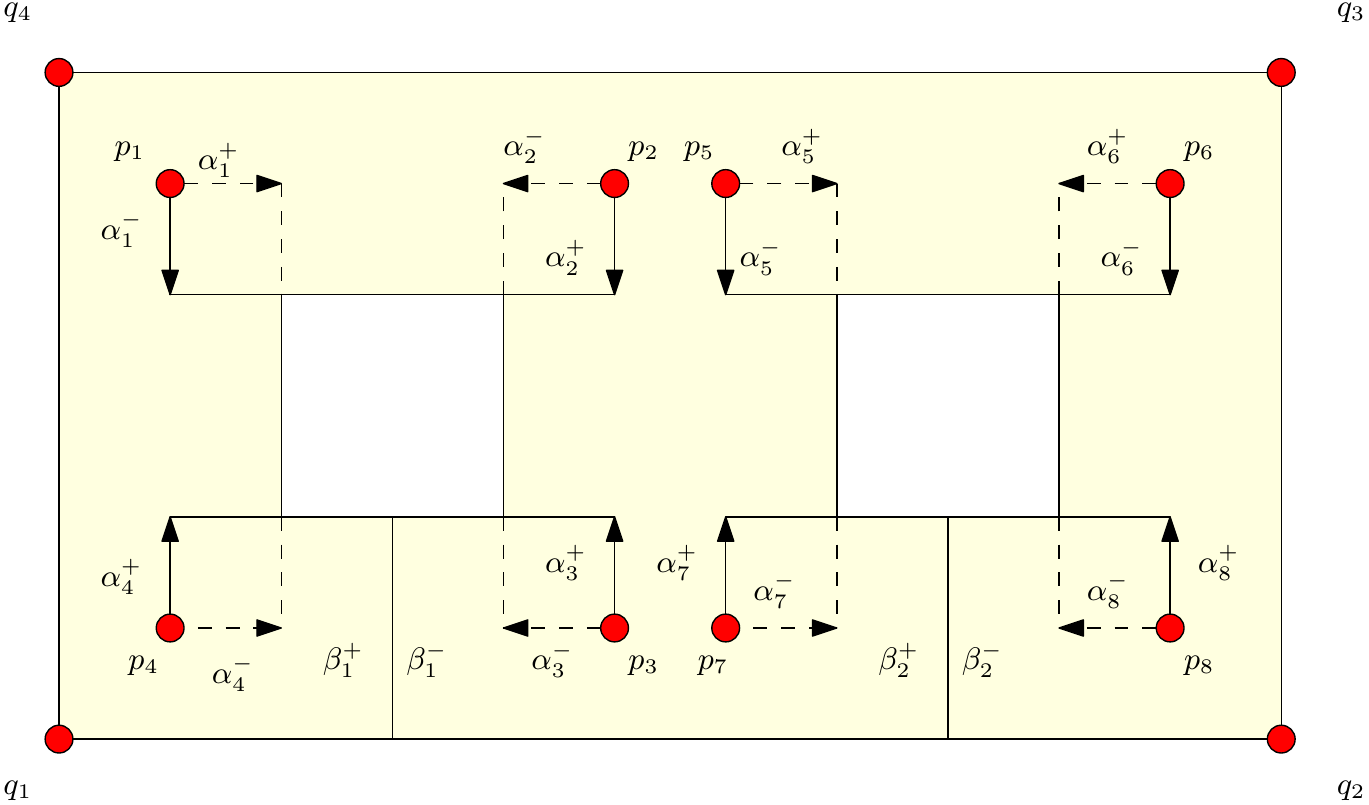}
\caption{Isometric immersion.}
\label{fig:immsersion}
\end{figure}

\noindent{\textbf{3. Isometric Immersion and Deformation}}
The surface $(\Omega,\mathbf{g})$ can be isometrically embedded in $\mathbb{E}^3$ as shown in Fig.~\ref{fig:embedding}. We slice $(\Omega,\mathbf{g})$ along the cut graph $L=\bigcup_i \alpha_i \bigcup_j \beta_j$ to obtain $\tilde{\Omega}$, which is a topological disk. Each $\alpha_k$ is split into two boundary segments $\alpha_k^+$ and $\alpha_k^-$, similarly each $\beta_k$ is split into $\beta_k^+$ and $\beta_k^-$. Then we isometrically flatten $\tilde{\Omega}$ to obtain an isometric immersion as shown in Fig.~\ref{fig:immsersion}, the immersion is denoted as $\varphi:\tilde{\Omega}\to \mathbb{R}^2$.

In this example, the singularities are carefully chosen, so that the metric $\mathbf{g}$ satisfies the all conditions in theorem \ref{thm:quad_mesh_metric}, $\alpha_k^+$ and $\alpha_k^-$ differ by a rotation of $\pi/2$, $\beta_k^+$ and $\beta_k^-$ differ by translations. Hence, the conformal structure deformation is skipped.

\begin{figure}[h!]
\centering
\includegraphics[width=0.85\textwidth]{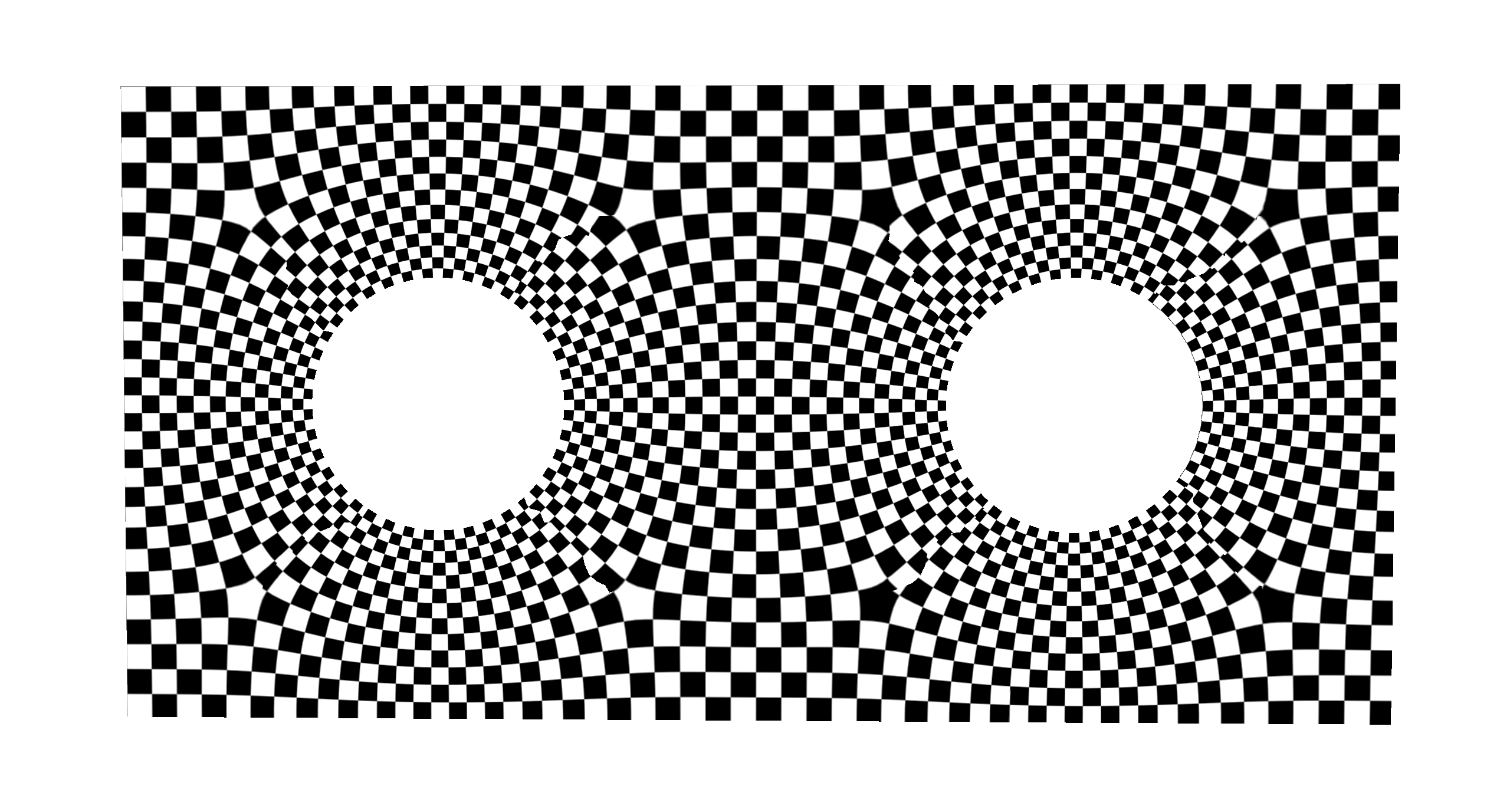}
\caption{Holomorphic quartic differential visualized by Checker board texture mapping.}
\label{fig:checker}
\end{figure}

We use the immsersion $\varphi$ as the texture coordinates for texture mapping. Fig.~\ref{fig:checker}. On the texture mapping image, we can see the horizontal and vertical geodesics.

\begin{figure}[h!]
\centering
\includegraphics[width=0.85\textwidth]{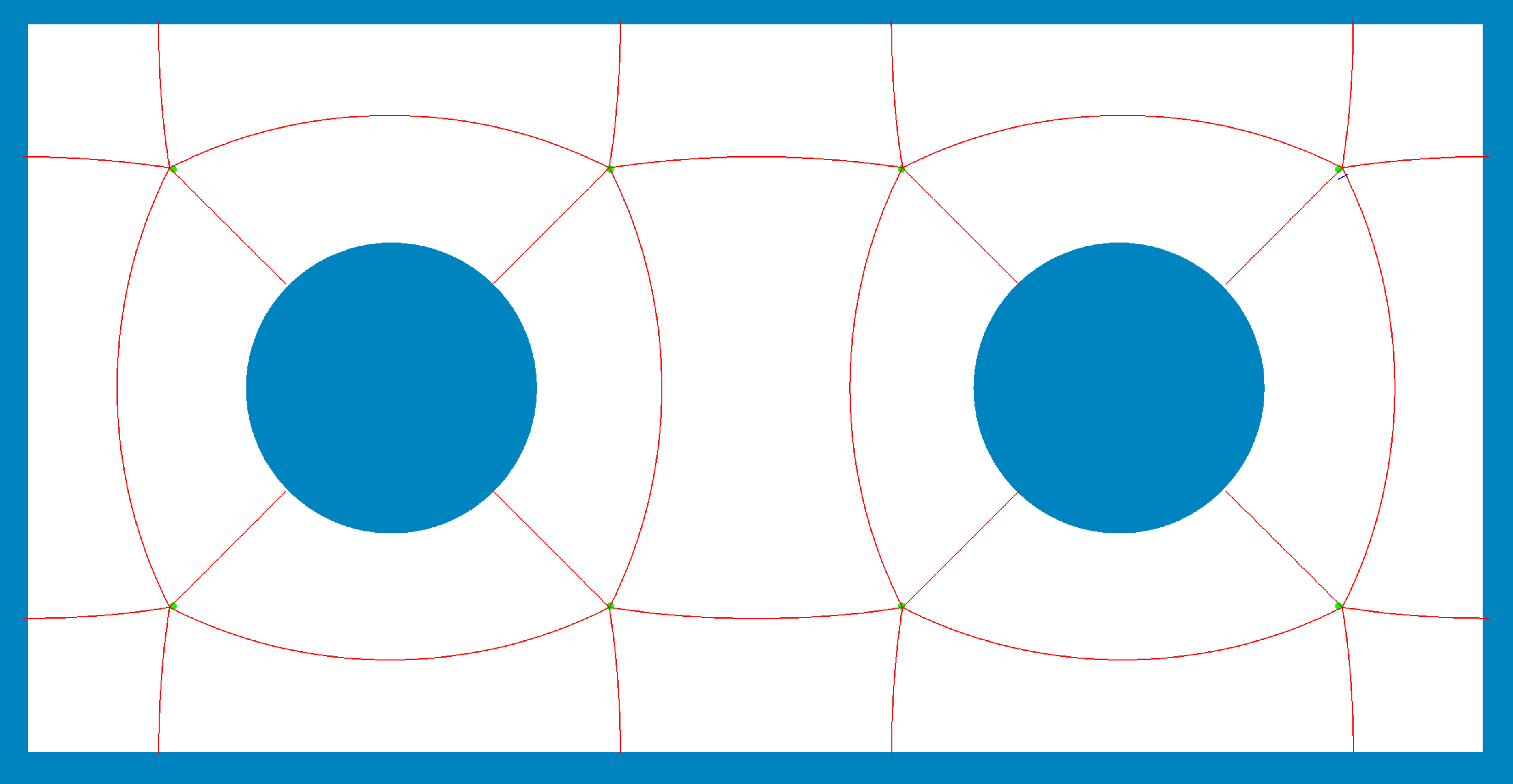}
\caption{Critical geodesics.}
\label{fig:critical_geodesics}
\end{figure}

\noindent{\textbf{3. Geodesics and Quad-meshing}}

The critical geodesics are traced, which either connect different singularities, or orthogonal to the boundaries. The critical geodesics give the skeleton of the quad-mesh as shown in Fig.~\ref{fig:critical_geodesics}.

\begin{figure}[h!]
\centering
\includegraphics[width=0.85\textwidth]{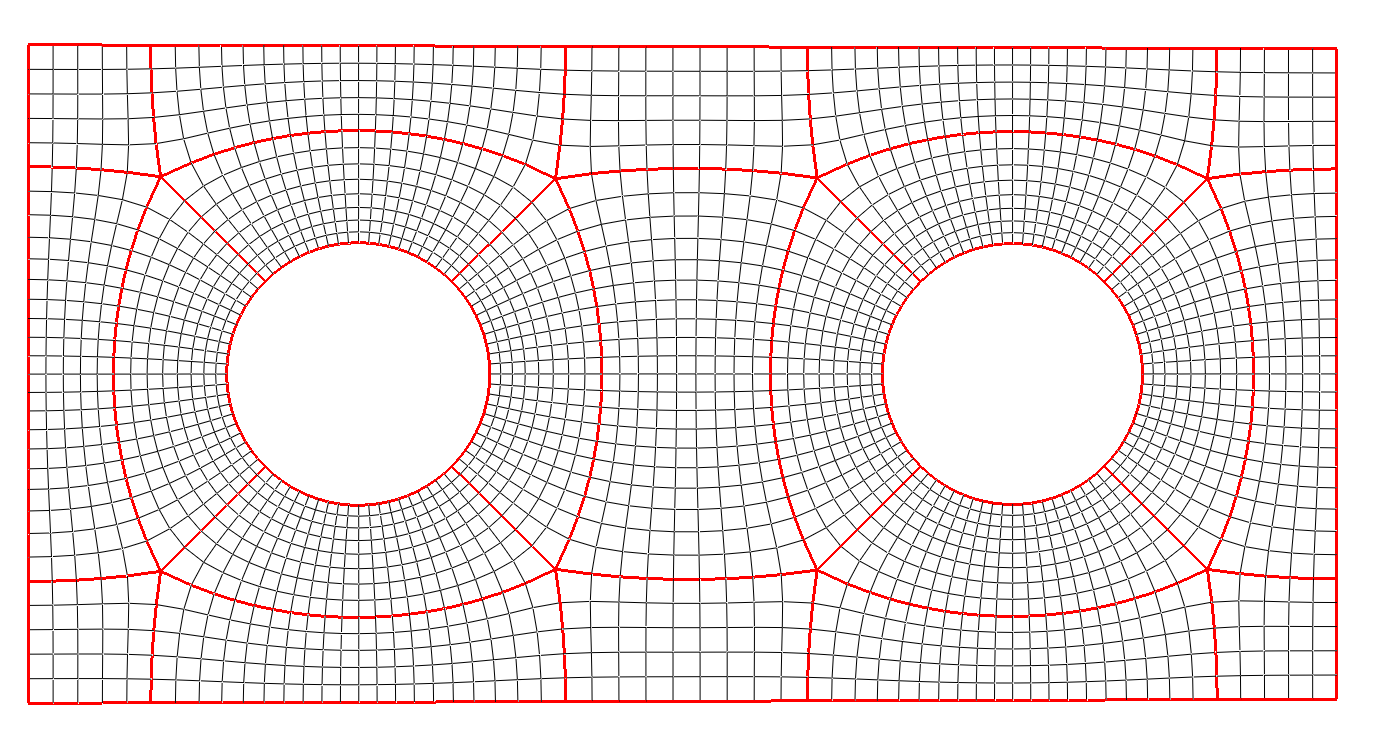}
\caption{The quad-mesh induced by geodesics.}
\label{fig:quad_mesh}
\end{figure}

We then refine the skeleton by tracing more geodesics on $(\Omega,\mathbf{g})$, which are orthogonal or parallel to the critical geodesics. The geodesics induce the quadrilateral meshing as shown in Fig.~\ref{fig:quad_mesh}. 
\section{Experimental Results}
\label{sec:experiments}

All the experiments were conducted on a PC with 1.90GHz Intel(R) core(TM) i7-8650U CPU and 64-bit Windows 10 operating system. The running time is reported in table \ref{tab:time}.

\begin{figure}[h!]
\centering
\begin{tabular}{c}
\includegraphics[width=0.75\textwidth]{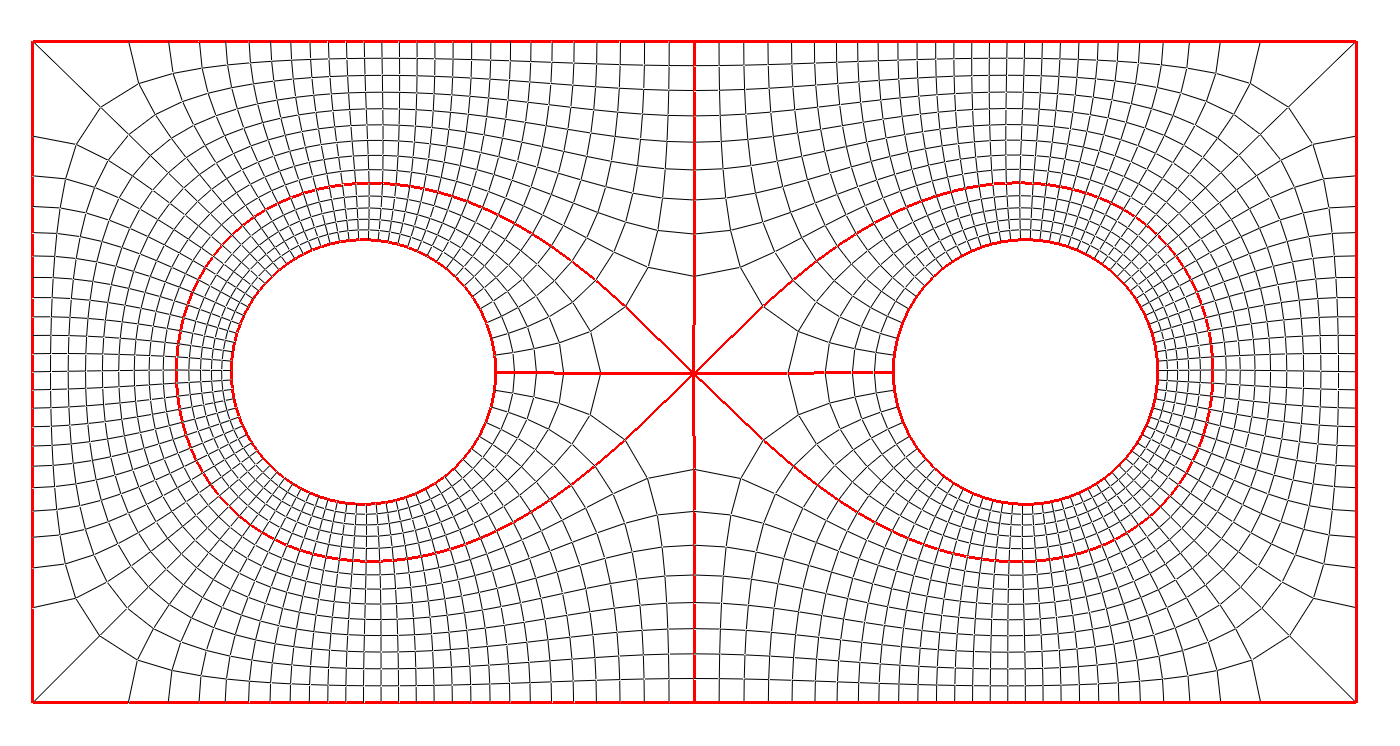}\\
\includegraphics[width=0.75\textwidth]{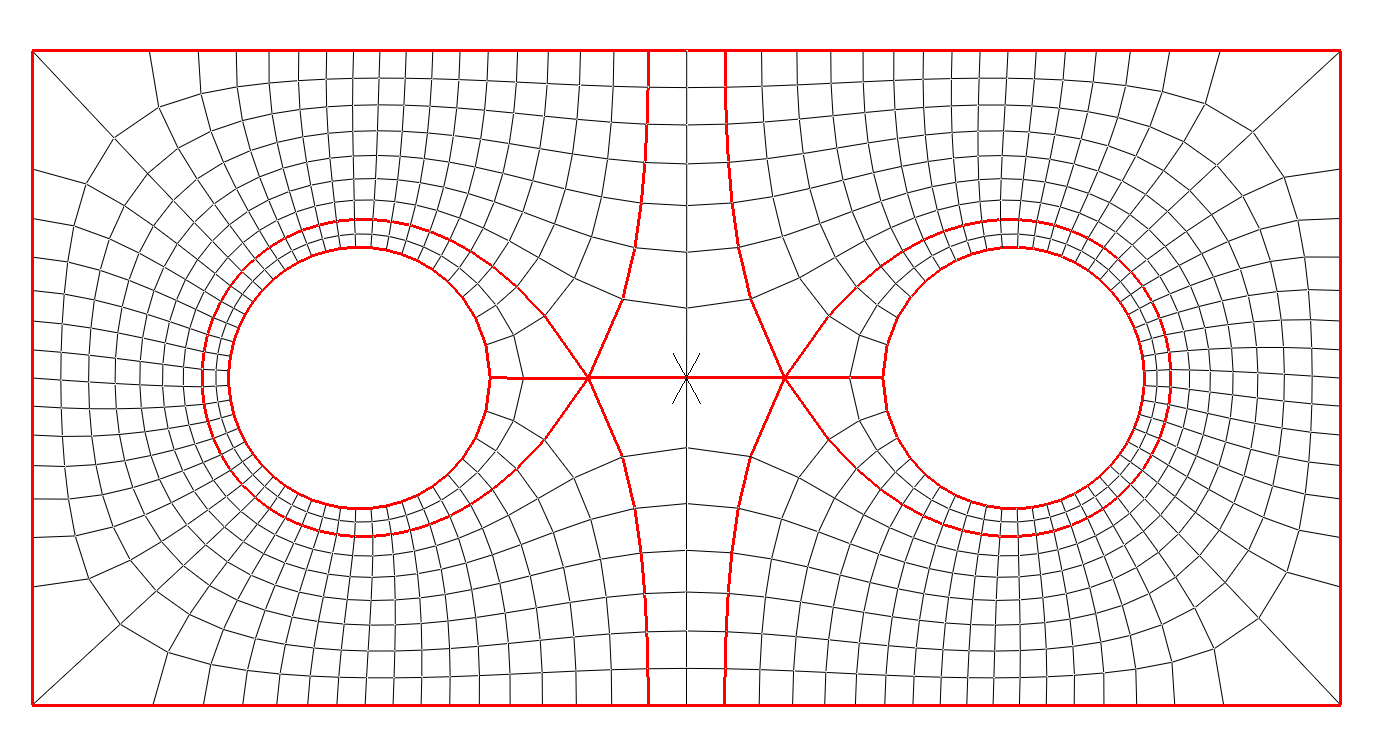}\\
\includegraphics[width=0.75\textwidth]{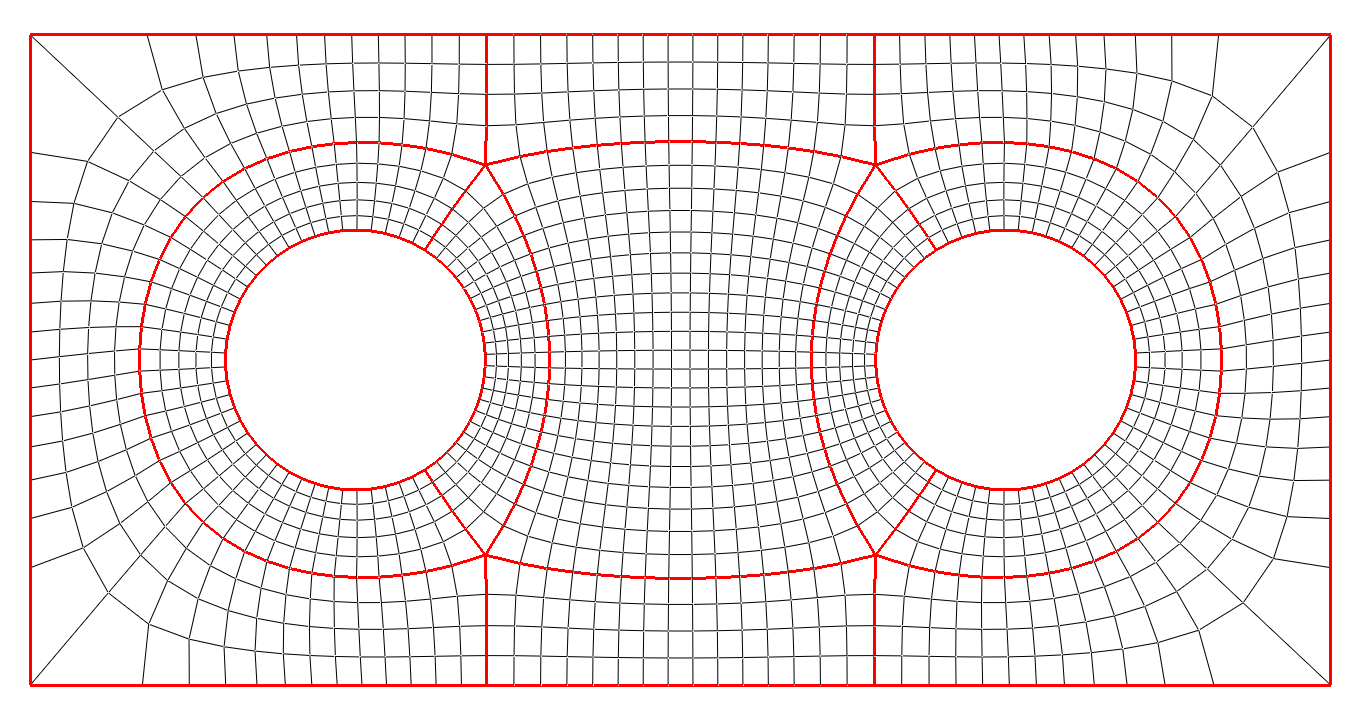}
\end{tabular}
\caption{A quad-mesh with 1,2 and 4 singularities on a planar rectangle with 2 circular inner holes.}
\label{fig:rectangle}
\end{figure}

Fig.~\ref{fig:rectangle} illustrates the quad-meshes obtained by the proposed method with different number of singularities of the planar rectangle with two circular inner holes.

\begin{figure}[h!]
\centering
\begin{tabular}{cc}
\includegraphics[width=0.5\textwidth]{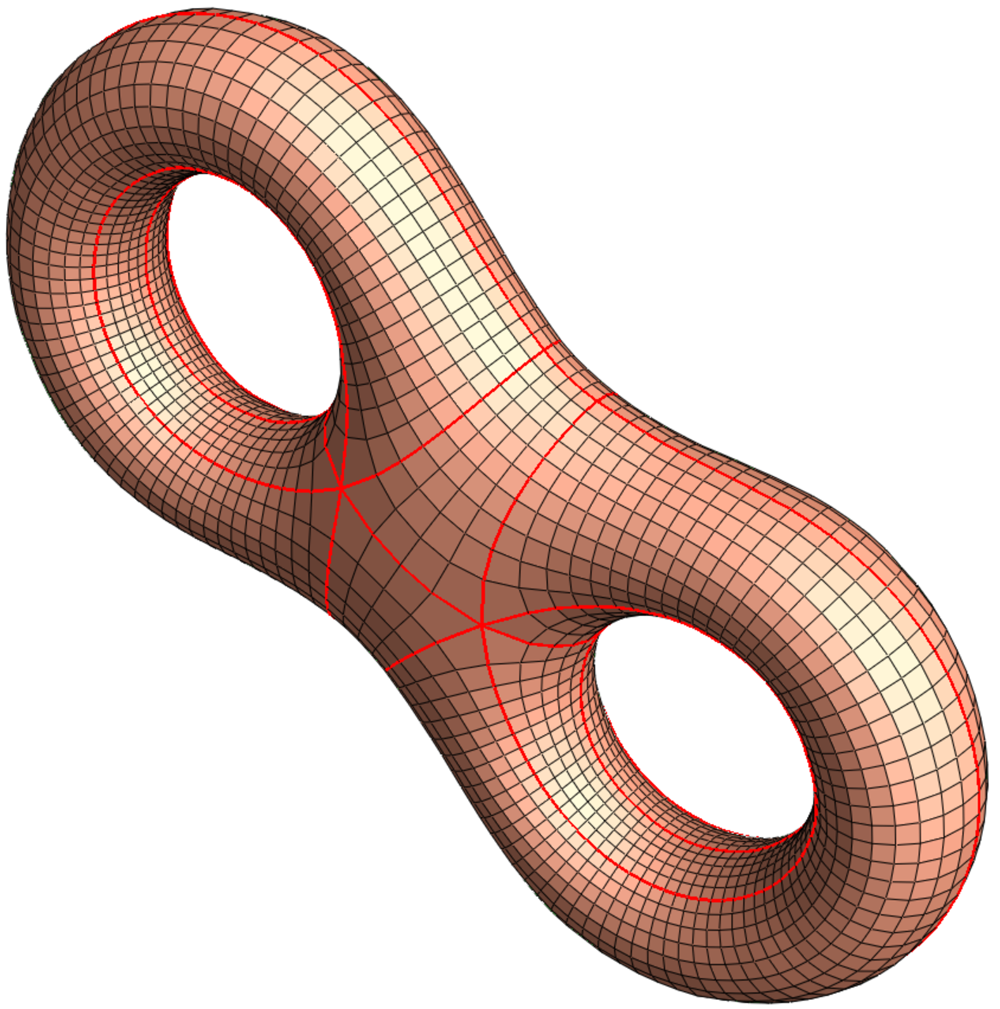}&
\includegraphics[width=0.5\textwidth]{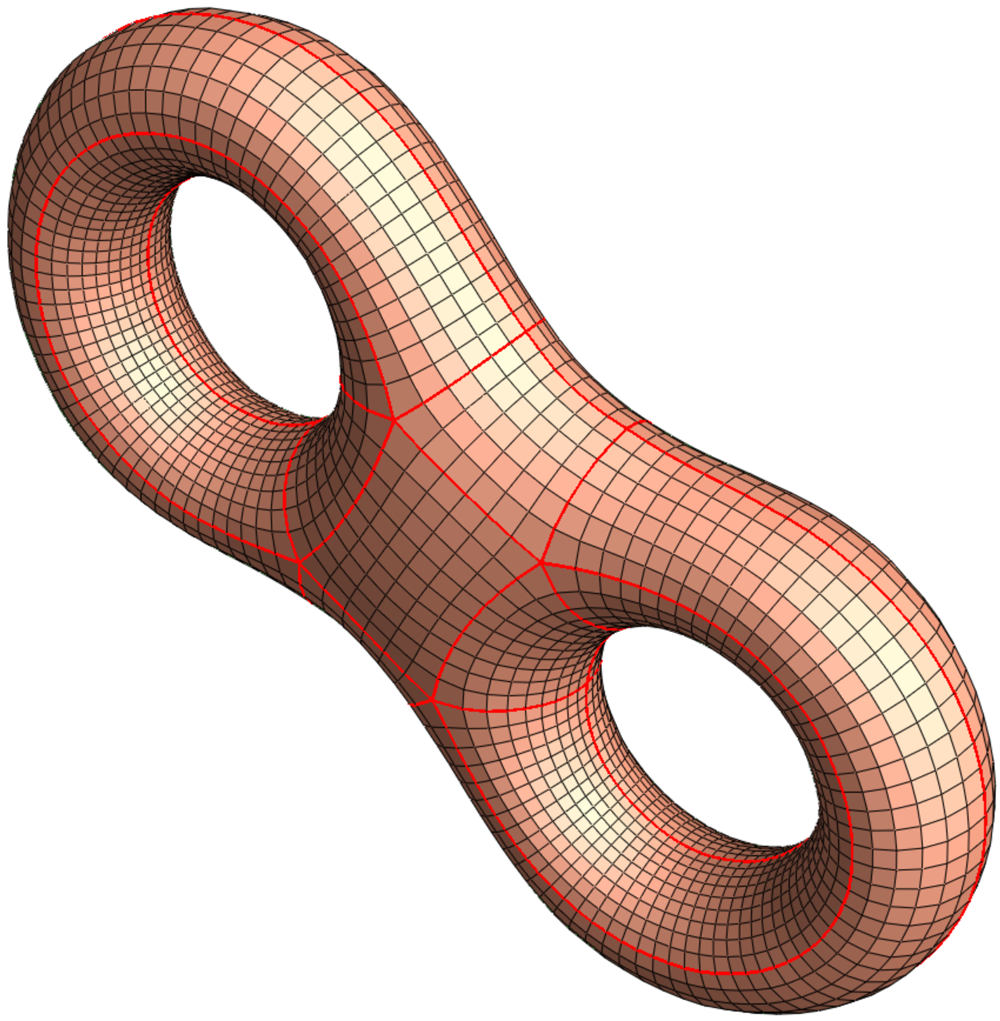}
\end{tabular}
\caption{A quad-mesh with 4 and 8 singularities on a genus 2 surface.}
\label{fig:eight2}
\end{figure}

The proposed method can be applied for surfaces with arbitrary typologies. Fig.~\ref{fig:eight1} and Fig.~\ref{fig:eight2} show the quad-meshes with different number of singularities of a genus two closed surface.

\begin{table}[]
    \centering
    \begin{tabular}{|c|c|c|c|}
    \hline
    Model & \# Vertices & \# Singularities & time (ms) \\
    \hline
    \hline
    Rectangle with holes &  10529 & 1 & 30936  \\
    \hline
    Rectangle with holes & 10675 & 2 & 40753 \\
    \hline
    Rectangle with holes & 10641 & 4 & 24510 \\
    \hline
    Rectangle with holes & 10668 & 12 & 18733\\
    \hline
    eight mesh& 16061 & 4 & 14988\\
    \hline
    eight mesh& 16060 & 8 & 12427\\
    \hline
    \end{tabular}
    \caption{Computational time.}
    \label{tab:time}
\end{table}

The computation of triangular mesh generation, Ricci flow, tracing critical trajectories (and generating skeletons) are fast enough to allow the users to modify the singularity positions interactively. The geodesic tracing is accurate the stable, so the computation of skeletons is straightforward without any manual refinement. Therefore, the whole computational pipeline is automatic expect the initial step for singularity location. 

\section{Conclusion}
\label{sec:conclusion}


This work proposes a novel metric based algorithm for quadrilateral mesh generating. Each quad-mesh induces a Riemannian metric satisfying special conditions: the metric is a flat metric with cone signualrites conformal to the original metric, the total curvature satisfies the Gauss-Bonnet condition, the holonomy group is a subgroup of the rotation group $\{e^{ik\pi/2}\}$, furthermore there is cross field obtained by parallel translation which is aligned with the boundaries, and its streamlines are finite geodesics. Inversely, such kind of metric induces a quad-mesh. Based on discrete Ricci flow and conformal structure deformation, one can obtain a metric satisfying all the conditions and obtain the desired quad-mesh.

This method is rigorous, simple and automatic. Our experimental results demonstrate the efficiency and efficacy of the algorithm.

In the future, we will explore the methods to automatically determined the positions and indices of all the singularities based on Abel differentials.

\end{document}